\PassOptionsToPackage{table,dvipsnames}{xcolor}
\documentclass[sigconf, screen]{acmart}

\usepackage[ruled,vlined,linesnumbered,noend]{algorithm2e}
\usepackage{xspace}
\usepackage{bbm}
\usepackage{arydshln}
\usepackage{subcaption}
\usepackage[nameinlink]{cleveref}

\usepackage{enumitem}

\AtBeginDocument{%
  }

\setcopyright{acmlicensed}
\copyrightyear{2018}
\acmYear{2018}
\acmDOI{XXXXXXX.XXXXXXX}

\acmConference[Conference acronym 'XX]{Make sure to enter the correct
  conference title from your rights confirmation emai}{June 03--05,
  2018}{Woodstock, NY}
\acmISBN{978-1-4503-XXXX-X/18/06}

\usepackage[breakable, skins]{tcolorbox}
\tcolorboxenvironment{example}{
  colback=gray!10!white,
  boxrule=0pt,
  boxsep=0pt,
  left=3pt,right=3pt,top=3pt,bottom=3pt,
  oversize=0pt,
  sharp corners,
  breakable,
  frame empty,
  opacityframe=0,
  enhanced,
}

\definecolor{ClusterGreen}{HTML}{8fbbda}
\definecolor{ClusterBlue}{HTML}{96d096}
\definecolor{FilledCircle}{HTML}{333333}

\SetKwComment{Comment}{$\blacktriangleright$~}{}

\SetCommentSty{commentfont}

\newcommand{\eg}{\emph{e.g.,}\xspace}
\newcommand{\ie}{\emph{i.e.,}\xspace}

\newcommand*\first{\emph{(i)}\xspace}
\newcommand*\second{\emph{(ii)}\xspace}

\newcommand{\unstable}[3]{#1\text{-unstable~for~}\allowbreak#2(\allowbreak#3)\allowbreak}

\newcommand{\stability}{\textsf{Stability}}

\newcommand{\kStableZoneBound}{k\text{-}\textsf{SB}_{f(D)}^t}
\newcommand{\akStableZoneBound}{\alpha\text{-}k\text{-}\textsf{SB}_{f(D)}^t}
\newcommand{\algoName}{\textsf{LStability}\xspace}

\newcommand{\resChanges}{\textsf{RC}}

\newcommand{\outerZone}{\textsf{E}}
\newcommand{\vol}{\textsf{Vol}}

\newcommand{\denseRegionAlgo}{\texttt{Detect-\allowbreak Dense-\allowbreak Region}\xspace}

\newtcolorbox{revaBox}{
  colback=BrickRed!10!white,
  boxrule=0pt,
  boxsep=1pt,
  left=6pt,right=6pt,top=6pt,bottom=6pt,
  oversize=2pt,
  sharp corners,
  breakable,
  frame empty,
  opacityframe=0,
  enhanced,
  borderline west = {2pt}{0pt}{BrickRed}
}

\newtcolorbox{revbBox}{
  colback=NavyBlue!10!white,
  boxrule=0pt,
  boxsep=1pt,
  left=6pt,right=6pt,top=6pt,bottom=6pt,
  oversize=2pt,
  sharp corners,
  breakable,
  frame empty,
  opacityframe=0,
  enhanced,
  borderline west = {2pt}{0pt}{NavyBlue}
}

\newtcolorbox{revcBox}{
  colback=Green!80!black!10!white,
  boxrule=0pt,
  boxsep=1pt,
  left=6pt,right=6pt,top=6pt,bottom=6pt,
  oversize=2pt,
  sharp corners,
  breakable,
  frame empty,
  opacityframe=0,
  enhanced,
  borderline west = {2pt}{0pt}{Green!80!black}
}

\newif\iftechreport

\theoremstyle{acmdefinition}

\Crefname{question}{Question}{Questions}

\theoremstyle{acmplain}

\Crefname{remark}{Remark}{Remarks}

\newenvironment{sketch}{%
  \renewcommand{\proofname}{Proof Sketch}\proof}{\endproof}

\begin{document}
\title{Local Stability of Rankings}

\author{Felix S. Campbell}
\orcid{0000-0003-3888-1491}
\affiliation{%
  \institution{Ben-Gurion University of the Negev}
  \city{Beersheba}
  \state{}
  \country{Israel}
}
\email{felixsal@post.bgu.ac.il}

\author{Yuval Moskovitch}
\orcid{0000-0002-0325-7392}
\affiliation{%
  \institution{Ben-Gurion University of the Negev}
  \city{Beersheba}
  \state{}
  \country{Israel}
}
\email{yuvalmos@bgu.ac.il}

\renewcommand{\shortauthors}{Felix S. Campbell \& Yuval Moskovitch}

\begin{abstract}
Rankings play a crucial role in decision-making. However, if minor changes to items significantly alter their rankings, the quality of the decisions being made can be compromised. The stability of ranking is a measure used to assess how modifications to the ranking algorithm or data affect results. While previous work has focused on stability of the ranking under changes to the algorithm, we introduce a novel measure we refer to as \emph{local stability}. Local stability indicates the effect of minor changes to the values of an item in the ranking on its rank.
Our proposed definition furthermore takes into account the presence of multiple items with similar qualities in the ranking, called \emph{dense regions}, permitting minor modifications to swap the positions of items within the region. We show that computing this measure in general is hard, and in turn propose a relaxation of the definition to admit approximation.

We present \first \algoName, a sampling-based algorithm for approximating local stability, on which we make probably-approximately-correct-type guarantees through the use of concentration inequalities, and \second \denseRegionAlgo, an algorithm based on this approach to detect the dense region an item lies in, if it exists. We introduce a number of optimizations to our algorithms to improve their scalability and efficiency. We validate our proposed framework through an extensive suite of experiments, including case studies highlighting the utility of our definitions.
\end{abstract}


\begin{CCSXML}
<ccs2012>
<concept>
<concept_id>10003456.10003457.10003567.10010990</concept_id>
<concept_desc>Social and professional topics~Socio-technical systems</concept_desc>
<concept_significance>500</concept_significance>
</concept>
   <concept>
       <concept_id>10002951.10002952</concept_id>
       <concept_desc>Information systems~Data management systems</concept_desc>
       <concept_significance>500</concept_significance>
       </concept>
</ccs2012>
\end{CCSXML}

\ccsdesc[500]{Information systems~Data management systems}
\ccsdesc[500]{Social and professional topics~Socio-technical systems}

\keywords{ranking stability, ranking explanation, sensitivity analysis}


\received{20 February 2007}
\received[revised]{12 March 2009}
\received[accepted]{5 June 2009}


\maketitle

\section{Introduction}
\label{sec:introduction}
Ranking items or individuals by their quantitative attributes often plays a key role in a wide range of domain applications, \eg in academia~\cite{nrc-ranking,goto-rankings-considered-helpful}, hiring \cite{LinkedIn}, or e-commerce~\cite{ecommerce-ranking}. A major underlying assumption of ranking is that a higher ranking reflects a meaningful improvement in utility over lower ranked items. However, if minor modifications to the data result in significant shifts in an item's position in the ranking, this fundamental assumption is undermined. 
As a simple example, consider the results of a motorsport event: if the driver finishing 1\textsuperscript{st} did so with a margin of only $0.2$ seconds, the distinction from the runner-up may be practically negligible, whereas a $20$-second lead would indicate a clearly superior performance. While the improvement in utility is clear for these simple cases, defining margins between items in more complex settings, such as learning-to-rank (LtR) \cite{fairness-ranking-survey-pt1} or rankings based on scores computed by multiple attributes, is not straightforward, as we next illustrate. 

\begin{table}[t]
    \caption{Top-10 computer science departments ranked according to adjusted average publication count. Each group of rows highlighted by a single color represents a dense region in the ranking in which the departments are similar.}
    \scalebox{0.83}{
        \begin{tabular}{c:crr:r}
        \hline
        & \textbf{University} & \textbf{AI Pubs.} & \textbf{Systems Pubs.} & \textbf{Score $\downarrow$} \\
        \hline
        \rowcolor{CornflowerBlue!20} $t_{1}$ & Lakefront University & 44 & 36 & 39.2  \\
        \rowcolor{CornflowerBlue!20} $t_{2}$ & Dempster University & 42 & 35 & 37.9  \\
        \rowcolor{CornflowerBlue!20} $t_{3}$ & Western Polytechnic & 43 & 33 & 36.7  \\
        \rowcolor{Peach!20} $t_{4}$ & Prairie University & 23 & 25 & 25.4  \\
        \rowcolor{Peach!20} $t_{5}$ & Ogden University & 22 & 24 & 24.4  \\
        \rowcolor{Peach!20} $t_{6}$ & Kedzie Institute & 20 & 24 & 23.8  \\
        \rowcolor{Peach!20} $t_{7}$ & University of Blue Island & 21 & 22 & 22.7  \\
        \rowcolor{ForestGreen!20} $t_{8}$ & Plainfield College & 7 & 13 & 11.9  \\
        \rowcolor{ForestGreen!20}$t_{9}$ & Irving University & 8 & 11 & 11.0  \\
        \rowcolor{ForestGreen!20} $t_{10}$ & Kimball College & 6 & 10 & 9.6 \\
        \bottomrule
        \end{tabular}
    }
    \label{tab:universities}
\end{table}
\begin{example}
    \label{ex:running}
    When choosing a university to enroll in, students often consult a ranking of potential universities compiled by third parties (\eg CSRankings \cite{CSRankings}) in order to pick a program that can best help them achieve their goals. As a running example, we consider a ranking of fictitious universities based on a simplified version of CSRankings for the sake of exposition.  \Cref{tab:universities} shows the top-$10$ universities in descending order of their adjusted average publication counts, computed by 
    $$\varphi(t) = \sqrt[17]{(t.\texttt{AI Pubs.} + 1)^5 \cdot (t.\texttt{Systems Pubs.} + 1)^{12}}$$
    Intuitively, this score is the geometric mean of adjusted publication counts for the $5$ and $12$ subfields of AI and systems, respectively.

    In this case, while Lakefront University is ranked~1\textsuperscript{st} according to the ranking function, with just four fewer publications in systems (keeping the rest of the data unchanged), it would be~3\textsuperscript{rd}. As the number of publications can slightly vary over time, this may raise questions such as \emph{``how much does this university deserve its position in the ranking?''} or \emph{``was this university a close second in the ranking?''} and questions on a broader level, such as \emph{``how likely are the two universities to shift in the ranking?''}
\end{example}

Quantifying the well-foundedness of the positions of items in a ranking is referred to as ranking \emph{stability},  
 and has been studied in previous works~\cite{AJMS18,sensitivity-vectors,stability-multigroup-fairness,ranking-nutrition,mithraranking,fairly-evaluating-and-scoring}. In particular, in \cite{AJMS18}, ranking stability is used to measure the robustness of the ranking to changes in the ranking function used to produce it.
This definition gives a \emph{global} ranking stability score and provides insight into how stable the ranking is with respect to uncertainty in the ranking function. 

Rankings frequently feature items with comparable qualities, 
 such as universities of comparable standing, which can create \emph{dense regions} within the ranking, where small changes can reasonably lead to position swaps among the items. This phenomenon was recognized, for example, in the National Research Council's ranking of doctoral programs \cite{nrc-ranking}, which provides ranges of possible rankings for each university to account for uncertainty in the collected data.

\begin{example}
    \label{ex:dense-regions}
    The universities in \Cref{tab:universities} are divided into three dense regions. Each dense region corresponds to a group of universities with similar scores generated by the given ranking function. As shown in \Cref{ex:running}, with small (hypothetical) modifications to the number of publications ($4$ in total), the~1\textsuperscript{st} ranked university becomes ~3\textsuperscript{rd}. In contrast, the change in the number of publications that would shift Lakefront University to be ranked 4\textsuperscript{th} is $17$. The presence of dense regions in a ranking may impact decision-making. For example, consider a student admitted to the universities ranked 4\textsuperscript{th} through 7\textsuperscript{th}. Although the top choice should be the highest-ranked option, since all of these universities belong to a dense region, they may be considered as roughly equivalent in quality. Consequently, other factors, such as location or program fit, might be prioritized over the precise ranking order.
\end{example}

The ranking stability measure presented in~\cite{AJMS18} has a coarse-grained nature, treating all changes in the ranking equally---for instance, a transposition of a single pair of items is as significant as a complete reversal of the ranking. Thus, this measure may overlook dense regions in the data.  
This motivates us to propose the concept of \emph{local stability}, which considers stability as a property of individual items within the ranking rather than the ranking as a whole. Considering stability as a property of an individual item leads to a more natural treatment of dense regions, which are themselves properties of local areas within the ranking. Our contribution may be summarized as follows.

\paragraph{\textbf{Local stability}} 
We define a local stability measure aiming to quantify the effect of modifications to data on the outcome of the ranking for a given tuple by characterizing the magnitudes of changes (or \emph{refinements}) to the tuple that can change its position significantly. To account for the effects of modifications to the data on the ranking we leverage the notion of counterfactual reasoning \cite{counterfactual-expl-survey,GeCo,DiCE,FACET}. Intuitively, we consider small hypothetical (counterfactual) modifications to a single tuple in the data to determine its margin relative to other tuples in the ranking. Restricting attention to a single tuple yields a local stability measure that reflects a best-case analysis, assuming no uncertainty in the remaining tuples of the database, which would otherwise exacerbate instability in many cases.
Intuitively, lower magnitude modifications that can change the tuple's position significantly correspond to a lower local stability.
Our definition hinges on a user-given definition of what a \emph{significant} change in positions is, and is specified by choosing a value $k$ representing a range of positions around the tuple's original ranking. This parameter $k$ can then be used to account for dense regions when evaluating the local stability.  
Furthermore, our definition takes into account a \emph{set of reasonable changes} which may be determined by domain-specific knowledge or in a data-driven manner, and bounds the magnitude of the refinements under consideration.
Since local stability is focused on changes in the data, we are able to treat the ranking function as a black-box, allowing our definition to accommodate any ranking function, including those derived by complex LtR models.

In a nutshell, refinements may be partially ordered by the magnitude of modifications applied to each attribute value. For instance, continuing  ~\Cref{ex:running}, changing the number of AI publications by 3 and systems publications by 2 is considered to be a larger modification compared to only altering the number of AI publications by 2. Utilizing this partial order, we can define a \emph{stable zone}, where, intuitively, any perturbations to the values of a given tuple do not cause its position in the ranking to change by more than the given value of $k$. The local stability value is then defined as the relative portion of the stable zone, compared to the overall space of reasonable changes given by the user. 

While useful and informative, computing the local stability of a tuple is intractable. 
To circumvent this challenge, we propose a relaxed definition, allowing the stable zone to possibly contain a small number of refinements that can modify the tuple's position in the ranking by more than $k$ (which we refer to as $k$-\emph{unstable} refinements), however keeping the probability that such a refinement is sampled (uniformly) from the stable zone low. 

\paragraph{\textbf{Estimating local stability}}
We present a two-stage sampling-based algorithm that first samples from the space of reasonable changes to construct (an approximate) stable zone. It then verifies that the generated stable zone is approximately stable by sampling refinements within it. We show that by taking enough samples, verifying whether the computed stable zone is indeed approximately stable (\ie includes a low number of unstable refinements) can be done with high probability.
We propose three optimizations to improve the performance of our sampling-based estimator. 

\paragraph*{\textbf{Detecting dense regions}}
The local stability of a tuple evaluates how well it aligns with its assigned position in the ranking, and may be used to answer questions such as \emph{``how much does this university deserve its position in the ranking?''} as demonstrated in \Cref{ex:running}. 
Alternatively, an important related problem is to determine the potential bounds within which a tuple's position could vary in the ranking. Towards this end, we propose \denseRegionAlgo, whose goal is to identify, given a tuple, a value $k$ that covers the extent of the dense region of the tuple. We do so by leveraging our local stability definition to understand where the margin between tuples (in counterfactual scenarios) is small enough to imply the existence and extent of a dense region.

\paragraph{\textbf{Experimental evaluation}} 
Finally, we validate our definition and proposed algorithms, \algoName and \denseRegionAlgo, through an extensive experimental evaluation using both real and synthetic data, showing the usefulness of the approach and the efficiency of the algorithms.
We present two case studies in which our definition leads to insights for rankings on real data. For example, we show that for the ranking of NBA players, the learned ranking function overfits Joel Embiid, implying his high ranking is ill-founded. Furthermore, we find that the CSRankings\cite{CSRankings} ranking is fairly locally stable for most of the top-10 universities, supporting its reliability. Our experimental results also indicate that \denseRegionAlgo is able to accurately identify dense regions observed in the ranking. We conclude the experimental evaluation with a comparison of our local stability measure to the global stability measure as defined in~\cite{AJMS18}, demonstrating how the two definitions can lead to divergent interpretations of the ranking.

\paragraph*{Paper organization}
The rest of the paper is organized as follows.
 In \Cref{sec:stability}, we provide the background and formalizations necessary and then define exact and approximate local stability for rankings. In \Cref{sec:computing}, we develop techniques to approximate local stability of an item in a ranking, and propose optimizations to our algorithm in \Cref{sec:optimizations}. We present a heuristic to detect the dense region around an item in \Cref{sec:detecting-dense-regions}. In \Cref{sec:experiments}, we provide case studies and empirical analysis on the performance of our approach and a survey of related work is given in \Cref{sec:related-work}. Finally, we draw conclusions and discuss future directions in \Cref{sec:conclusion}.

\section{Problem Formulation}

\begin{table}[t] 
    \centering
    \caption{Summary of notations}
    \label{tab:notations}

    \begin{tabular}{@{} l p{0.7\columnwidth} @{}} 
        \toprule
        \textbf{Notation} & \textbf{Description} \\ 
        \midrule
        
        $\varepsilon(t)$                            & Tuple obtained by refining $t$ by $\varepsilon$ \\
        $D_{t \rightarrow t'}$                      & Database $D$ where tuple $t$ is replaced by $t'$ \\ 
        $\Delta_{f(D)}(t, t')$                      & Change in positions for $t$ when replaced by $t'$ under $f(D_{t \rightarrow t'})$ \\
        $\preceq$                                   & Refinement containment relation \\
        $\outerZone(\cdot)$                         & Set of non-containing refinements \\
        $\resChanges$                               & Set of reasonable changes \\
        $\kStableZoneBound$                         & $k$-stable zone boundary \\
        $\akStableZoneBound$                        & $\alpha$-$k$-stable zone boundary \\
        \bottomrule
    \end{tabular}
\end{table}

\label{sec:stability}
Local stability measures the effect of small modifications to a tuple on its position in the ranking while also taking into account dense regions, where multiple tuples have a similar quality and small changes can reasonably lead to position swaps among the tuples. We assume that data modifications can be made independently, in the sense that changing one tuple does not require modifying any other tuple to preserve the semantics of the data. Under this assumption, we focus on modifications to individual tuples.
We consider a \emph{ranking} to be a permutation of a set of tuples~\cite{sharp}. A \emph{ranking function} is a function $f$ mapping a database $D$ to a permutation of its tuples. We use $f(D)$ to denote the ranking resulting when applying $f$ on a database $D$, and $f(D)[t]$ for the position of $t$ in the ranking.
When working with any database $D$, we consider only its numerical attributes. This assumption is standard and follows prior work (\eg~\cite{whynotyet,AJMS18,sharp}).

\subsection{Refinements}
We aim to understand how changes in the values of different attributes of tuples affect their positions in the ranking. To do this, we first define \emph{refinements} of a tuple, which describe possible modifications to the tuple and their impact on its values.

\begin{definition}[Refinement]
    \label{def:refinement}
     Given a database $D$ with numeric attributes $a_1,\dots,\allowbreak a_n$, a \emph{refinement} is a vector $\varepsilon = \langle \varepsilon_{1}, \dots, \varepsilon_n \rangle \in \mathbb{R}^n$. 
     Given a refinement $\varepsilon$, the refined tuple $\varepsilon(t)$ is the tuple $t'$ such that for all $i \in [n]$,
     $$t'.a_{i} = t.a_{i} + \varepsilon_i$$
\end{definition}

\begin{example}
    \label{ex:refinement}
    Let us continue from \Cref{ex:running}. Considering a refinement $\varepsilon = \langle -10, 5\rangle$, we have that $\varepsilon(t_1)$ is the tuple $(34,41)$.
\end{example}

To understand how changes in the attributes of tuples affect their positions in the ranking, we define the notion of \emph{position change} between two tuples.

\begin{definition}[Position change]
    \label{def:pos-change}
    For a database $D$ and a tuple $t$, let $D_{t\rightarrow t'}$ be the database obtained by replacing $t$ with $t'$ $$D_{t\rightarrow t'} = D \setminus \{t\} \cup \{t'\}$$ 
    The change of positions attained by substituting $t'$ for $t$ is 
    $$\Delta_{f(D)}(t, t') = |f(D)[t] - f(D_{t \rightarrow t'})[t']|$$ 
\end{definition}

\begin{example}
    \label{ex:position-change}
    Consider the refinement $\varepsilon = \langle -10, -5 \rangle$. The refined tuple $\varepsilon(t_1) = (34, 31)$ has a score of $32.9$, which is less than the score of $t_3$. Therefore, we have $\Delta_{f(D)}(t_1, \varepsilon(t_1)) = 2$ since $\varepsilon(t_1)$ ranks 3\textsuperscript{rd} in $D_{t \rightarrow \varepsilon(t_1)}$.   
\end{example}

Recall that our goal is to determine the effect of small modifications to a tuple on its position in the ranking, while also taking into account dense regions. To this end, we introduce a parameter $k$, which defines a range of positions around the tuple with which we would like to evaluate its local stability\footnote{In this paper, we discuss the symmetric case where the range extends in both directions of the ranking equally, for clarity of presentation. Extending our definitions and methods to make these ranges asymmetric is straightforward.}. This parameter can then be used to represent the size of the area around a tuple (\ie tuples that are ranked at most $k$ positions above or below) with approximately equal quality, \ie the dense region. We assume $k$ is given as an input, although we propose a heuristic in \Cref{sec:detecting-dense-regions} for determining a value of $k$ suitable for capturing the dense region around the tuple, if it exists.
We are now ready to define a $k$-(un)stable refinement.

\begin{definition}[$k$-(un)stable refinement]
    \label{def:unstable-refinement}
    Given a database $D$, a ranking function $f$, a refinement $\varepsilon$, a value 
    $k$, and a tuple $t\in D$, we say $\varepsilon(t)$ is \emph{$k$-stable} for $t$ over $f(D)$ if and only if $\Delta_{f(D)}(t, \varepsilon(t)) \leq k$. Otherwise, we say that $\varepsilon(t)$ is \emph{$k$-unstable}. 
\end{definition}

\begin{example}
    Consider again the refinement $\varepsilon = \langle -10, -5 \rangle$, as in \Cref{ex:position-change}. Since $\Delta_{f(D)}(t_1, \varepsilon(t_1)) = 2$, we have that $\varepsilon(t_1)$ is $1$-unstable for $t_1$ over $f(D)$.   
\end{example}

\subsection{Local Stability}
\label{sec:def-local-stability}

In order to quantify the effect of small modifications to a tuple has on its position in the ranking, we use the magnitude of the changes required to do so.
As smaller changes are more likely than larger changes, we are primarily interested in the \emph{smallest} modifications that cause an item to move more than $k$ positions in the ranking. For instance, in our running example, having $\pm$2 systems publications is more likely than having $\pm10$ systems publications. With this in mind, we first establish a partial order over refinements.

\begin{definition}[Refinement containment]
    \label{def:containment}
We say that a refinement $\varepsilon$ is \emph{contained} in a refinement $\varepsilon'$ (and $\varepsilon'$ \emph{contains} $\varepsilon$), and denote it by $\varepsilon \preceq \varepsilon'$, if for every $i \in [n]$, $|\varepsilon_i| \leq |\varepsilon'_i|$. 
\end{definition}

Refinement containment is defined with respect to absolute values as we are primarily concerned with the \emph{magnitudes} of the refinements being made. For example, the refinement $\varepsilon' = \langle 10, 6\rangle$ contains the refinement $\varepsilon = \langle 10, -5\rangle$ since $\left|-5\right| \leq \left|6\right|$. Naturally, we write $\varepsilon \prec \varepsilon'$ when $\varepsilon \preceq \varepsilon'$ and there is an $i$ such that $|\varepsilon_i| < |\varepsilon'_i|$, \ie it is the \emph{strict} refinement containment relation. 
We can now define a $k$-stable zone boundary of a tuple $t$ as follows. 

\begin{definition}[Stable zone boundary]
    \label{def:stable-zone-boundary}
    Given a database $D$, a tuple $t \in D$, a ranking function $f$ and a value $k$, let $U$ be the set of refinements such that $\varepsilon(t)$ is $k$-unstable, and $U^+$ be the projection of $U$ onto $\mathbb{R}^n_{\geq 0}$. Then, the \emph{$k$-stable zone boundary} of $t$ is the set $$\kStableZoneBound =Min_\prec(U^+)$$ where $$Min_\prec(R) = \{\varepsilon \in R \mid \not\exists \varepsilon'\in R,~\varepsilon' \prec \varepsilon \}$$%
\end{definition}

Intuitively, the stable zone boundary consists of the $k$-unstable refinements of \emph{minimal} magnitude, \ie there is no $k$-unstable refinement containing any refinement of the boundary (this is also known as a \emph{skyline} \cite{skyline}).

\begin{example}
\label{ex:stable-zone}
Consider a dataset $D$ with two attributes and a ranking function $f$ over tuples in $D$. \Cref{fig:space-of-refinements} illustrates the space of refinements around a tuple $t\in D$ where each axis correspond to the values of $t$ in each one of the attributes. Refinements in the green region are $k$-stable, and the red area consists of $k$-(un)stable refinements (\ie refinements in the set $U$). \Cref{fig:k-stable-zound-boundary} shows the projection of the refinements onto $\mathbb{R}^n_{\geq 0}$. In this illustration, all the refinements in the green area remain $k$-stable, whereas the lighter, cross-hatched red region corresponds to refinement magnitudes admitting both $k$-stable and $k$-unstable refinements in the projected space. Consequently, the stable zone boundary $\kStableZoneBound$ is the curved black line that separates the green zone from the red zone.
\end{example}

\begin{definition}[Stable zone]\label{def:stable-zone}
    For a set of refinements $R$, we denote by $\outerZone(R)$ the set of refinements that do not contain any refinement in $R$, \ie $$\outerZone(R) = \{\varepsilon \in \mathbb{R}^n \mid \not\exists \varepsilon' \in R,~ \varepsilon' \preceq \varepsilon\}$$
    We refer to $\outerZone(\kStableZoneBound)$ as the \emph{stable zone}.
\end{definition}
By \Cref{def:stable-zone-boundary}, any refinement in the stable zone is guaranteed to be $k$-stable. Continuing \Cref{ex:stable-zone}, the green region in \Cref{fig:k-stable-zound-boundary} is the stable zone $\outerZone(\kStableZoneBound)$.

The local stability of a tuple is defined with respect to a set of reasonable changes specified by the user, which bounds the values of possible refinements to small changes to the data for which the user believes should not result in significant changes to the tuple's ranking (either leveraging domain-specific knowledge or in a data-driven manner as we do in \Cref{sec:case-studies}). More formally, we define $\varepsilon^{max}$ as a vector $\langle \varepsilon_{1}^{max}, \dots, \varepsilon_n^{max} \rangle \in \mathbb{R}^n_{\geq 0}$, and consider only refinements $\varepsilon$ such that $\varepsilon \preceq \varepsilon^{max}$.
\begin{example}
\label{ex:reasonable-changes}
 In the context of \Cref{ex:running}, the set of reasonable changes may depend on the amount of activity in the research area. For example, if we assume AI is a more active area than systems at these universities, we may assume that there is a larger variance in the number of AI papers than systems papers. Therefore, we may choose, \eg $\varepsilon^{max} = \langle 5, 3 \rangle$, encoding the assumption that each university might have $\pm 5$ AI publications, and $\pm 3$ systems publications.
\end{example}

\begin{figure}[t]
    \begin{subfigure}[t]{0.48\linewidth}
        \centering
        \includegraphics[width=1.66cm]{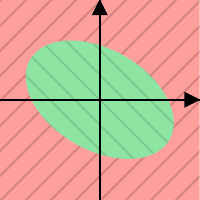}
        \caption{(Un)stable refinements}
        \label{fig:space-of-refinements}
    \end{subfigure}
    \begin{subfigure}[t]{0.48\linewidth}
        \centering
        \includegraphics[width=1.7cm]{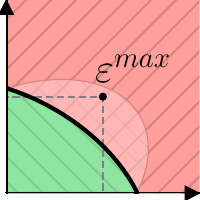}
        \caption{$\kStableZoneBound$ \& $\resChanges$}
        \label{fig:k-stable-zound-boundary}
    \end{subfigure}
    \caption{Illustrated examples of definitions in \Cref{sec:stability}}
    \label{fig:definitions}
\end{figure}

We let $\resChanges$ denote the set of refinements contained within the vector $\varepsilon^{max}$ (\ie all $\varepsilon \preceq \varepsilon^{max}$). For example, in \Cref{fig:definitions}, $\resChanges$ is the set of refinements whose projection into $\mathbb{R}^n_{\geq 0}$ is in the area delineated by the dashed gray lines. We are finally ready to define the local stability of a tuple. 

\begin{definition}[Local stability]
    \label{def:localstability}
    The \emph{local stability} of a tuple $t$ with respect to a database $D$, a ranking function $f$, a set of reasonable changes $\resChanges$, and a parameter $k$ is defined as
    $$\stability_{f(D)}(t, k \mid \resChanges) = \frac{\vol\big(\resChanges \cap \outerZone(\kStableZoneBound) \big)}{\vol(\resChanges)}$$
    where $\vol(\cdot)$ is the volume of the sets (\ie its Lebesgue measure).
\end{definition}

Intuitively, local stability measures the relative size between the stable zone restricted to the set of reasonable changes and the set of reasonable changes. 

\begin{example}
\label{ex:rc}
For the $\varepsilon^{max}$ and $\resChanges$ shown in \Cref{fig:k-stable-zound-boundary},
the local stability is the ratio between the green area restricted to $\resChanges$ and the area beneath $\resChanges$. In this case, the local stability would be high, since the area is mostly green.
\end{example}

Computing $\kStableZoneBound$ is a non-trivial task. In fact, we can show
\begin{theorem}
    \label{thm:hardness_bound_comp}
    Unless $\textsf{FP} = \textsf{\#P}$, there is no polynomial-time algorithm for computing $\kStableZoneBound$ given a database $D$, a tuple $t \in D$, a ranking function $f$, and a value $k$.
\end{theorem}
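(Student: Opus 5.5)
The plan is a parsimonious reduction from \#SAT. Given a CNF formula $\phi$ over variables $x_1,\dots,x_m$, I will build, in polynomial time, a database $D$, a tuple $t$, a value $k$, and a polynomial-time computable ranking function $f$ such that $\kStableZoneBound$ is a finite set whose elements are in bijection with the satisfying assignments of $\phi$. A polynomial-time algorithm that outputs $\kStableZoneBound$ then lets us count its elements in polynomial time, and so computes $\#\phi$. This would give $\textsf{FP} = \textsf{\#P}$.

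\textbf{Step 1: an antichain encoding of assignments.} Use $n = 2m$ numeric attributes. For $x \in \{0,1\}^m$ define $s(x) \in \mathbb{R}^{2m}_{\geq 0}$ by $s(x)_i = 1 + x_i$ and $s(x)_{m+i} = 2 - x_i$ for $i \in [m]$. If $x \neq y$, then $s(x)$ and $s(y)$ are incomparable under $\preceq$: if $x_i < y_i$, then $s(x)$ is smaller in coordinate $i$ and larger in coordinate $m+i$. So $\{s(x)\}$ is an antichain.

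\textbf{Step 2: the database and ranking function.} Let $D$ consist of $t$, with all attributes equal to $0$, and $k+1$ further dummy tuples, with $k = 0$ say. Given a tuple $t'$, let $\varepsilon = t' - t$ and decode $x(\varepsilon) \in \{0,1\}^m$ by $x(\varepsilon)_i = 1$ iff $|\varepsilon_i| \geq 2$. The ranking function $f$ first ranks the tuples of $D$ in a fixed order with $t$ first. On $D_{t \to t'}$, it places $t'$ last if the predicate
$$P(\varepsilon):\quad |\varepsilon| \succeq s(x(\varepsilon)) \ \text{coordinatewise and}\ \phi(x(\varepsilon)) = 1$$
holds, and first otherwise. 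Evaluating $P$ only requires decoding one assignment and evaluating $\phi$ on it, so $f$ is polynomial-time computable. By construction, $\varepsilon(t)$ is $k$-unstable exactly when $P(\varepsilon)$ holds, because then $\Delta_{f(D)}(t,\varepsilon(t)) = k+1$. Since $P$ depends only on $|\varepsilon|$, the projection $U^+$ equals $\{\varepsilon \in \mathbb{R}^{2m}_{\geq 0} : P(\varepsilon)\}$.

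\textbf{Step 3: the boundary is exactly $\{s(x) : \phi(x)=1\}$.} This is the step I expect to need the most care, since $U^+$ is not upward closed and we must check the minima directly.

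First, each $s(x)$ with $\phi(x)=1$ lies in $U^+$. Decoding it gives $x(s(x)) = x$, because $s(x)_i \geq 2$ iff $x_i = 1$, and then $P$ holds.

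Second, every $\varepsilon \in U^+$ satisfies $s(x(\varepsilon)) \preceq \varepsilon$, where $s(x(\varepsilon)) \in U^+$ by the first point.

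Now take $s(x) \in U^+$ and suppose some $\varepsilon' \in U^+$ has $\varepsilon' \prec s(x)$. Then $s(x(\varepsilon')) \preceq \varepsilon' \prec s(x)$. By the antichain property this forces $x(\varepsilon') = x$, which would give $s(x) \prec s(x)$, a contradiction. So each such $s(x)$ is minimal.

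Conversely, let $\varepsilon$ be minimal in $U^+$. We have $s(x(\varepsilon)) \preceq \varepsilon$ with $s(x(\varepsilon)) \in U^+$, so minimality forces $\varepsilon = s(x(\varepsilon))$.

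Hence $\kStableZoneBound = \{s(x) : \phi(x) = 1\}$, and its cardinality is $\#\phi$.

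\textbf{Step 4: conclude.} Computing $\kStableZoneBound$ in polynomial time would let us read off its size, which is $\#\phi$. Thus \#SAT $\in \textsf{FP}$, and hence $\textsf{FP} = \textsf{\#P}$.

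One point needs a remark in the write-up: the output could have exponential size. If the intended output representation allows compression, note that even producing any representation from which $|\kStableZoneBound|$ is computable in polynomial time already suffices for the reduction.
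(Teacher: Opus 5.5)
Your proof is correct and follows essentially the same route as the paper. Both arguments are a parsimonious reduction from a \#P-complete counting problem; the paper reduces from \#DNF where you reduce from \#SAT. In both, each variable gets a complementary pair of attributes, so encodings of distinct assignments form a $\preceq$-antichain, and a black-box ranking function with $k=0$ demotes the refined tuple past a dummy exactly when the decoded assignment satisfies the formula. The minor differences are these: your thickened unstable region (all $\varepsilon$ dominating a consistently decoded $s(x)$) needs the explicit minimality argument of Step 3, whereas the paper's unstable set is just the finite antichain of exact $0/1$ encodings. Your remark about exponential output size also addresses a point the paper treats only implicitly.
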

\begin{sketch}
    We prove this by showing that if $\kStableZoneBound$ can be computed in polynomial time, then computing $|\kStableZoneBound|$ is possible in polynomial time as well. However, we show that \textsf{\#DNF}, a well-known \textsf{\#P-complete} problem \cite{dnf-sharp-p-complete}, is polynomial-time reducible to computing $|\kStableZoneBound|$. Therefore, if there is such an algorithm to compute $\kStableZoneBound$, then $\textsf{FP} = \textsf{\#P}$. We relieve the details to \Cref{sec:apdx-hardness}.
\end{sketch}

Moreover, we expect that even if the stable zone boundary $\kStableZoneBound$ is given, computing the stability measure remains intractable since it is closely related to the problem of computing the \emph{hypervolume indicator}, which computes the volume of a union of hyperrectangles anchored at the origin and is known to be \textsf{\#P-hard}~\cite{hypervolume-survey,hypervolume-fpras}.

\subsection{$\alpha$-Local Stability}

Following the complexity of local-stability computation, we define an approximate version of the problem. We start by defining an $\alpha$-$k$-stable zone boundary as a set of refinements that is a $k$-stable boundary up to some error. Namely, the probability of sampling a $k$-unstable refinement from the set of refinements under an $\alpha$-$k$-stable zone boundary is at most $\alpha$ when samples are drawn uniformly.

\begin{definition}[$\alpha$-$k$-stable zone boundary]
    \label{def:alpha-stable-zone-boundary}
    Given a database $D$, a tuple $t \in D$, a ranking function $f$, a value $k$, and a set of reasonable changes $\resChanges$, let $U$ be the set of refinements such that $\varepsilon(t)$ is $k$-unstable, and $U^+$ be the projection of $U$ onto $\mathbb{R}^n_{\geq 0}$. Then, an \emph{$\alpha$-$k$-stable zone boundary} of $t$ is a set $\akStableZoneBound \subseteq U^+$ such that
    \begin{enumerate}
        \item [\first] $\akStableZoneBound = Min_\prec(\akStableZoneBound)$
        \item [\second] $\mathbb{P}_{\varepsilon \sim \mathcal{U}(\resChanges)}[\varepsilon(t) \text{ is } k\text{-unstable} \mid \varepsilon \in \outerZone(\akStableZoneBound)] \leq \alpha$
    \end{enumerate}
    where $\mathcal{U}$ is the uniform distribution.
\end{definition}

When $\alpha$ is small, we can assure that with high probability, refinements in $RC \cap \outerZone(\akStableZoneBound)$ when made to the tuple will not cause its position in the ranking to shift by more than $k$. We can now define $\alpha$-local stability.

\begin{definition}[$\alpha$-local stability]
    \label{def:alpha-localstability}
    An \emph{$\alpha$-local stability} of a tuple $t$ with respect to a database $D$, a ranking function $f$, an $\akStableZoneBound$ $Sb$, a set of reasonable changes \resChanges, and a parameter $k$ is defined as
$$\alpha\text{-}\stability_{f(D)}(t, k \mid Sb, \resChanges) = \frac{\vol(\resChanges \cap \outerZone(Sb))}{\vol(\resChanges)}$$
\end{definition}

\section{Computing Local Stability}
\label{sec:computing}

We propose \textbf{\algoName}, a sampling-based algorithm to estimate $\alpha$-local stability given a database $D$, a tuple $t\in D$, a ranking function $f$, a set of reasonable changes $\resChanges$, a value $k$, and parameters relating to the quality and confidence of the approximation which we discuss in the following. The algorithm relies on the fact that, despite the hardness of computing the stable zone boundary (\Cref{thm:hardness_bound_comp}), given a set of refinements $R$, we can efficiently bound $\alpha$, such that $R$ is an $\akStableZoneBound$ with high probability. 
We first show that with a sufficient number of samples, we can estimate the probability that a given set $R$ contains a $k$-unstable refinement. 

\begin{proposition}\label{prop:hoeffding-pac}
    Given a database $D$, a ranking function $f$, a value $k$, a tuple $t \in D$, a confidence level $\delta$, a confidence range $\eta$, and a set of refinements $R$, let 
    \begin{align*}
p &= \mathbb{P}_{\varepsilon \sim \mathcal{U}(\resChanges \cap \outerZone(R))}[ \varepsilon(t) \text{ is } k\text{-unstable}]\\
\widehat{p} &= \frac{1}{N}\sum_{\varepsilon \in \mathcal{S}} \mathbbm{1}[\varepsilon(t) \text{ is } k\text{-unstable}]
    \end{align*}
    where $\mathcal{S}$ is a set of $N = \frac{1}{2\eta^2} \ln \frac{1}{\delta}$ samples drawn i.i.d. uniformly from $\resChanges \cap \outerZone(R)$. Then we have that 
    $$\mathbb{P}[p - \widehat{p} \leq \eta] \geq 1 - \delta$$
\end{proposition}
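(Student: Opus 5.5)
The argument is a direct application of the one-sided Hoeffding inequality to Bernoulli indicator variables. First, write $\mathcal{S} = \{\varepsilon^{(1)},\dots,\varepsilon^{(N)}\}$ and define the indicators
$$X_j = \mathbbm{1}[\varepsilon^{(j)}(t) \text{ is } k\text{-unstable}], \qquad j \in [N].$$
Because the samples are drawn i.i.d.\ from $\mathcal{U}(\resChanges \cap \outerZone(R))$, the $X_j$ are i.i.d., take values in $\{0,1\}$, and each has mean $\mathbb{E}[X_j] = p$ by the definition of $p$. We then have $\widehat{p} = \frac{1}{N}\sum_j X_j$ and $\mathbb{E}[\widehat{p}] = p$.

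A measurability point needs attention. The event ``$\varepsilon(t)$ is $k$-unstable'' must be measurable, and the uniform distribution must be well defined, which requires $0 < \vol(\resChanges \cap \outerZone(R)) < \infty$. I will assume both. Measurability holds because $\Delta_{f(D)}$ takes finitely many integer values and we treat $f$ as a black box whose level sets are measurable. Boundedness follows from $\resChanges$ being a box. If the zone had measure zero, $p$ would be undefined, so the proposition implicitly assumes a nondegenerate zone.

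Next, apply the one-sided Hoeffding bound for variables bounded in $[0,1]$:
$$\mathbb{P}\big[\mathbb{E}[\widehat{p}] - \widehat{p} \ge \eta\big] \le \exp(-2N\eta^2).$$
Substituting $N = \frac{1}{2\eta^2}\ln\frac{1}{\delta}$ makes the right-hand side exactly $\delta$. Taking complements gives $\mathbb{P}[p - \widehat{p} < \eta] \ge 1-\delta$, and hence $\mathbb{P}[p - \widehat{p} \le \eta] \ge 1-\delta$.

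The only remaining nuisance is that $N$ must be an integer. I will take $N = \lceil \frac{1}{2\eta^2}\ln\frac{1}{\delta}\rceil$. The bound $\exp(-2N\eta^2)$ decreases in $N$, so rounding up preserves the guarantee.

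The main obstacle is not any calculation but checking that the i.i.d.\ assumption matches the algorithm. In practice, uniform samples from $\resChanges \cap \outerZone(R)$ would be produced by rejection sampling from $\resChanges$. Accepted samples of a rejection sampler are i.i.d.\ from the conditional distribution, so Hoeffding applies unchanged. For the statement as given, however, i.i.d.\ sampling is a hypothesis, so the proof reduces to the Hoeffding step above.
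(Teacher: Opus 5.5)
Your proposal is correct and takes the same route as the paper, whose proof is just ``an immediate application of the Hoeffding inequality'': the one-sided bound $\exp(-2N\eta^2)$ on the i.i.d.\ $\{0,1\}$ indicators equals exactly $\delta$ at $N = \frac{1}{2\eta^2}\ln\frac{1}{\delta}$. Your extra details are sound refinements that the paper leaves implicit: rounding $N$ up to an integer, assuming $\resChanges \cap \outerZone(R)$ has positive finite measure and measurable instability events, and noting that rejection sampling yields i.i.d.\ accepted samples.
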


\begin{sketch}
    This results as an immediate application of the Hoeffding inequality \cite{statisticallearningbook}.
\end{sketch}
The following is a corollary of \Cref{prop:hoeffding-pac}
\begin{corollary}\label{cor}
  $N = \frac{1}{2\eta^2} \ln \frac{1}{\delta}$ samples drawn i.i.d. uniformly from $\resChanges \cap \outerZone(R)$ are sufficient to confirm that $R$ is an $\akStableZoneBound$ for $\alpha = \widehat{p} + \eta$  with probability $1 - \delta$.    
\end{corollary}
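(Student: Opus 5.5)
The plan is to apply \Cref{prop:hoeffding-pac} directly and then check that its conclusion matches condition \second of \Cref{def:alpha-stable-zone-boundary}. Condition \first of that definition (being a set of $\prec$-minimal unstable refinements, $R\subseteq U^+$) is structural rather than probabilistic. I will treat it as a standing assumption on the candidate set $R$: it is given, or enforced by replacing $R$ with $Min_\prec(R)$, which leaves $\outerZone(R)$ unchanged because any element containing a non-minimal element also contains a minimal one below it. So the content of the corollary is the probabilistic condition \second.

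The first step is to rewrite condition \second. For $\varepsilon\sim\mathcal{U}(\resChanges)$, conditioning on the event $\varepsilon\in\outerZone(R)$ yields the uniform distribution on $\resChanges\cap\outerZone(R)$; this requires $\vol(\resChanges\cap\outerZone(R))>0$, which I will assume since otherwise sampling from it is meaningless. Hence the conditional probability in \second equals exactly the quantity $p$ of \Cref{prop:hoeffding-pac}, and \second reads $p\le\alpha$.

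The second step is to take $N=\frac{1}{2\eta^2}\ln\frac1\delta$ i.i.d.\ samples, compute $\widehat p$, and set $\alpha=\widehat p+\eta$. \Cref{prop:hoeffding-pac} gives $\mathbb{P}[p-\widehat p\le\eta]\ge 1-\delta$. The event $p-\widehat p\le\eta$ is the same as $p\le\widehat p+\eta=\alpha$, so with probability at least $1-\delta$ condition \second holds and $R$ is an $\akStableZoneBound$.

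The only delicate point is that $\alpha$ is itself random, because it depends on $\widehat p$. The statement must therefore be read as a confidence statement about the event $\{p\le\widehat p+\eta\}$, not as a fixed-$\alpha$ claim, and this event is exactly the one controlled by the one-sided Hoeffding bound. Underlying this is the one-sided inequality $\mathbb{P}[p-\widehat p>\eta]\le e^{-2N\eta^2}=\delta$, which holds because the indicators are i.i.d.\ Bernoulli$(p)$ variables bounded in $[0,1]$. This is already packaged in \Cref{prop:hoeffding-pac}, so no further work is needed.
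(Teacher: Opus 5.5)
Your proposal is correct and follows the paper's route. The paper states this as an immediate consequence of \Cref{prop:hoeffding-pac}: the conditional probability in condition~\emph{(ii)} of \Cref{def:alpha-stable-zone-boundary} is exactly $p$, the event $p-\widehat{p}\le\eta$ is the event $p\le\widehat{p}+\eta=\alpha$, and condition~\emph{(i)} is structural. Your extra remarks (positive volume of $\resChanges\cap\outerZone(R)$, and $\alpha$ being data-dependent) fill in details the paper leaves implicit. Two small points: $R\subseteq U^+$ must be assumed, not enforced by passing to $Min_\prec(R)$ (it holds by construction for $Sb$ in \algoName), and the equality $\outerZone(R)=\outerZone(Min_\prec(R))$ relies on $R$ being finite.
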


\begin{figure}[t]
    \begin{subfigure}[t]{0.31\linewidth}
        \centering
        \includegraphics[width=1.66cm]{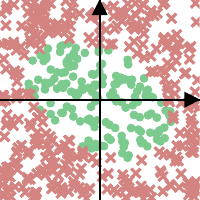}
        \caption{Construction}
        \label{fig:construction-sampling}
    \end{subfigure}
    \begin{subfigure}[t]{0.31\linewidth}
        \centering
        \includegraphics[width=1.66cm]{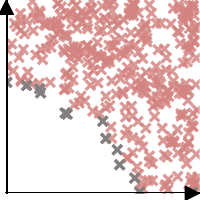}
        \caption{$Sb$}
        \label{fig:construction-sb}
    \end{subfigure}
    \begin{subfigure}[t]{0.31\linewidth}
        \centering
        \includegraphics[width=1.6cm]{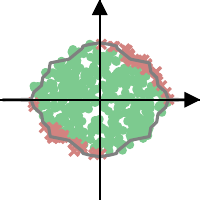}
        \caption{Verification}
        \label{fig:verification}
    \end{subfigure}
    \caption{Illustrated example of the steps of \algoName, based on the space of refinements shown in \Cref{fig:definitions}. (a) Green (circular) points represent sampled stable refinements, while red (cross) points represent unstable refinements. (b)  The projection of the $k$-unstable refinements in $C$ onto $\mathbb{R}^n_{\geq 0}$ ($U_C^+$). The gray (cross) points represent those along the estimated stable zone boundary. (c) Samples taken from the estimated stable zone  $\resChanges \cap \outerZone(Sb)$ for the verification step.}
    \label{fig:lstability-demo}
\end{figure}

$\algoName$ builds on this idea. It gets as input a database $D$, a tuple $t \in D$, a ranking function $f$, a value $k$, a set of reasonable changes $\resChanges$, a confidence level $\delta$, a sample budget $N$, a confidence range $\eta$, and consists of two phases. The first generates a set of refinements $Sb$, which is an $\akStableZoneBound$ with high probability based on \Cref{cor}. In the second phase, $\algoName$ outputs an estimation of the ratio between the volume of the estimated stable zone defined by $Sb$ and the volume of \resChanges. \Cref{fig:lstability-demo} provides an illustrative overview of \algoName.
 In the first phase, $\algoName$ computes a set of refinements $Sb$, which is an $\akStableZoneBound$ with high probability. This is done in two sample-based steps: set \emph{construction} and \emph{verification}. 
 In the construction step, illustrated in \Cref{fig:construction-sampling,fig:construction-sb}, the algorithm proceeds as follows:
\begin{enumerate}[leftmargin=*, nosep]
    \item Samples a set $C$ of $N$ refinements uniformly from \resChanges~and assigns to $U_C^+$ the projection of the $k$-unstable refinements in $C$ onto $\mathbb{R}^n_{\geq 0}$ (see \Cref{fig:construction-sampling}).
    \item Computes $Sb = Min_\prec(U_C^+)$  (see \Cref{fig:construction-sb}).
\end{enumerate}
In the verification step, illustrated in \Cref{fig:verification}, the algorithm performs the following:
\begin{enumerate}[leftmargin=*, nosep]
    \setcounter{enumi}{2}
    \item It samples a set $V$ of a sufficient number of refinements ($N = \frac{1}{2\eta^2} \ln \frac{1}{\delta}$ samples) uniformly from $\resChanges \cap \outerZone(Sb)$ 
    \item It computes the proportion of $k$-unstable refinements in $V$ to the overall number of samples $N$, thereby determining $\alpha$ such that $Sb$ is $\akStableZoneBound$ with probability $1-\delta$ based on \Cref{cor}
\end{enumerate}

Finally, in the second phase, given $Sb$ from the first phase, $\algoName$ applies standard Monte Carlo methods~\cite{montecarlo} to estimate $\alpha\text{-}\stability_{f(D)}(t,k\mid Sb,\resChanges)$. 
In cases where the stable zone is guaranteed to be convex (\eg when the ranking function is based on scores computed by a linear combination of attributes), alternative methods may be used as a drop-in replacement. In particular, \cite{fast-practical-cvx-polytope-volume-est,practical-volume-algo} describe efficient algorithms for estimating the volumes of convex polytopes with theoretical guarantees. While these methods offer stronger theoretical guarantees for both accuracy and efficiency when applicable, the Monte Carlo approach is generally efficient and sufficiently accurate for our purposes. Moreover, we note that the convexity assumption does not hold in many practical settings.

\section{Optimizations}
\label{sec:optimizations}

We next outline three optimizations to \algoName. The first reduces the set of reasonable changes \resChanges, allowing the algorithm to utilize the sample budget more effectively. 
The second optimization aims to reduce the cost of evaluating the ranking function $f$. This optimization is applicable when a modification of a tuple in the data does not affect the relative order of other tuples obtained by the ranking function, which is a common property of ranking functions. Finally, we suggest bounding $\alpha$ to reduce the running time. This is done by \emph{iteratively} running the construction and verification steps with a partial sample budget in each iteration, allowing early termination if a desired level of $\alpha$ has been reached.

\subsection{Reducing the Set of Reasonable Changes}
The first optimization aims at improving the utility of the sampling budget by narrowing down the set of reasonable changes such that it is guaranteed to still contain all $k$-stable refinements under the stable zone boundary. Using a smaller set of reasonable changes reduces the sample space for the construction step of \algoName, allowing for more informative sampling as the excluded refinements can not contribute to the computation of the stable zone boundary.

To reduce the set of reasonable changes, we use single-dimensional refinements, i.e., refinements that modify a single attribute. Considering such refinements allows us to eliminate refinements that can not be within the stable zone boundary. More formally, consider an individual attribute $a_i$, and a refinement $\varepsilon \in \mathbb{R}^n$ such that $|\varepsilon_{i}| > 0$, and $\varepsilon_{j} = 0$ for every $j \neq i$. Furthermore, assume that $\varepsilon(t)$ is $\unstable{k}{f}{D}$. As a consequence of the definition of $\kStableZoneBound$, we then have that any refinement $\varepsilon'$ such that $|\varepsilon_i| \leq |\varepsilon'_i|$ cannot be in $\outerZone(\kStableZoneBound)$.

\begin{example}
    Consider our running example and $\varepsilon^{max} = \langle 5, 3 \rangle$ defined in \Cref{ex:reasonable-changes}. \Cref{fig:reduced-reasonable-changes} depicts the stability of refinements for Dempster Uni. and $k = 0$. Consider the refinement $\varepsilon = \langle 0, 2 \rangle$ as shown in \Cref{fig:reduced-reasonable-changes}, $\varepsilon(t_2)$ is $\unstable{0}{f}{D}$. Thus, any refinement with more than $\pm 2$ systems publications cannot be in $0\text{-}\textsf{SB}_{f(D)}^t$. In fact, any refinement that lies above the gray cannot be part of the stable zone boundary.
\end{example}

\begin{figure}[t]
    \centering
    \includegraphics[width=0.58\linewidth]{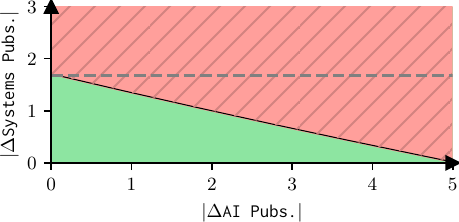}
    \caption{The stability of refinements for Dempster Uni. and $k = 0$. Refinements in the green (solid) area are $0$-stable and the red (hatched) area $0$-unstable. The reduced set of reasonable changes containing all refinements on the stable zone boundary consists of those under the gray dashed line.}
    \label{fig:reduced-reasonable-changes}
\end{figure}

To reduce \resChanges, a portion of the sampling budget is used to sample single-dimensional refinements $\varepsilon^{max}_i$ for every attribute $a_i$. Let $\varepsilon^*_i$ denote the minimal (absolute) value of $\varepsilon_i$ in the sample such that $\varepsilon(t)$ is $\unstable{k}{f}{D}$. We then define $\resChanges^*$ as the set of reasonable changes where $\varepsilon^{max}_i$ = $\varepsilon^*_i$. \algoName\ then uses $\resChanges^*$ instead of $\resChanges$, and given an estimation of $\alpha\text{-}\stability_{f(D)}(t, k \mid Sb, \resChanges^*)$, we scale it by a factor of $\frac{\vol(\resChanges^*)}{\vol(\resChanges)}$ to obtain $\alpha\text{-}\stability_{f(D)}(t, k \mid Sb, \resChanges)$.
When the ranking function is monotone \cite{monotonic}, reducing $\resChanges$ can be done efficiently using binary search, which we detail in \Cref{sec:apdx-binary-search}.

\subsection{Reduce Re-ranking Cost}
\label{sec:ranking-fewer-tuples}
To determine whether a refinement $\varepsilon$ is $k$-stable, \algoName\ needs to compute the value $\Delta_{f(D)}(t, \varepsilon(t))$, which in turns depends on the value of $f(D_{t \rightarrow \varepsilon(t)})[\varepsilon(t)]$. The function $f$ may be arbitrarily complex in its dependence on the size of the database $D$, e.g., in the case of learning to rank functions. Thus, evaluating $f(D_{t \rightarrow \varepsilon(t)})$ may be costly, in particular, when the sample budget $N$ is large. However, we note that determining whether $\Delta_{f(D)}(t, t') \leq k$ can be done without the cost of computing $f(D_{t \rightarrow \varepsilon(t)})$ for tuple-independent ranking functions which we next define.

\begin{definition}[Tuple-independent ranking function]
    \label{def:tuple-independent-ranking-function}
    A ranking function is \textbf{tuple-independent} if for any $t \in D$ and $t'$ obtained by refining $t$, and all $a, b \in D \setminus \{t, t'\}$ such that $f(D)[a] < f(D)[b]$, then $f(D_{t \rightarrow t'})[a] < f(D_{t \rightarrow t'})[b]$.
\end{definition}

Tuple independence is a common feature of ranking functions. For instance, the scoring function based on the geometric mean of attributes in \Cref{ex:running} is tuple-independent, since altering the attributes of one tuple only modifies the score of that tuple, leaving the relative ordering between the rest of the tuples unchanged.

Given a tuple $t$ and a refinement $\varepsilon$, let $t^\uparrow$ ($t^\downarrow)$ be the tuple $k + 1$ positions above (below) $t$ in $f(D)$ where $f$ is a tuple-independent ranking function. Since the relative ordering between all tuples except for the refined tuple $\varepsilon(t)$ remains the same, to determine whether $\Delta_{f(D)}(t, \varepsilon(t)) \leq k$, it is enough to check whether the relative order between $\varepsilon(t)$ and $t^\uparrow$ ($t^\downarrow$) is different than that of $t$ and $t^\uparrow$ ($t^\downarrow$). Therefore, we only need to evaluate $f$ over three tuples (a constant number), which can be significant when the database is large and the ranking function evaluation heavily depends on the data size (\eg in LtR models), as we show in our experimental evaluation (\Cref{sec:opt_exp}).

\subsection{Reduce Sampling for Bounded $\alpha$} 
\begin{figure}[t]
    \centering
    \includegraphics[width=\linewidth]{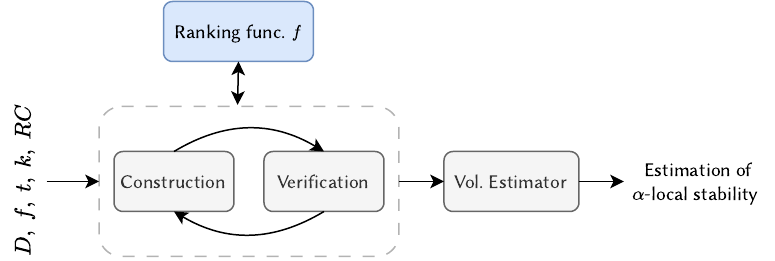}
    \caption{Overview of the components of \algoName}
    \label{fig:iterative}
\end{figure}

$\algoName$ returns the level of minimality $\alpha$ attained by the refinements boundary it computes. 
However, it is natural to opt for a desired $\alpha$, and it may be unnecessary for the construction step to sample as many times as it does in order to obtain a set of refinements $Sb$ such that $Sb$ is a $\akStableZoneBound$ with the desired $\alpha$. To this end, we extend $\algoName$ with the ability to apply iteratively the \emph{construction} and \emph{verification} steps as depicted in \Cref{fig:iterative}. In each iteration, the algorithm computes a set $Sb$ using a limited amount of samples, and then checks whether it is $\akStableZoneBound$ for $\alpha\leq$ the given bound. If so, the algorithm continues to the volume estimation of the $Sb$ and \resChanges. Otherwise, the next construction step merges the unstable refinements from the previous verification step into $Sb$ to improve its estimation using these counterexamples, and samples only from $\resChanges \cap \outerZone(Sb)$ which ensures the sampled refinements can only improve our estimation of $Sb$. The process runs for $\ell$ iterations, as determined by the user.

The choice of the number of samples taken per iteration by the construction step introduces a trade-off. The goal is to take just enough samples to get a good enough estimated stable zone boundary, namely, a set which is $\akStableZoneBound$ with the desired $\alpha$.  However, when more than a single iteration is required to obtain a set $Sb$ such that it is $\akStableZoneBound$ for the given $\alpha$, the verification step is evaluated multiple times. 
We propose apportioning the budget equally to each iteration, taking into account the samples taken by the verification step in each iteration. Specifically, if $N$ samples were allocated to the construction step and $V$ samples to the verification step in the vanilla version of \algoName, we allocate $\frac{N+V}{\ell} - V$ samples to the construction step in each iteration. Under this allocation, we take as many samples as was originally budgeted in the worst case.

\section{Detecting Dense Regions}
\label{sec:detecting-dense-regions}

\begin{figure}
    \begin{subfigure}[c]{0.48\linewidth}
        \centering
        {\small
        \begin{tabular}{cc:c}
        \hline
            \textbf{X} & \textbf{Y} & \textbf{Score $\downarrow$} \\ \hline
            \rowcolor{ClusterBlue!50} 9.71 & 0.86 & 10.57 \\
            \rowcolor{ClusterGreen!50} 1.38 & 9.15 & 10.53 \\
            \rowcolor{ClusterBlue!50} 9.30 & 1.20 & 10.51 \\
            \rowcolor{gray!20} ... & ... & ... \\
            \rowcolor{ClusterBlue!50} 8.31 & 0.52 & 8.83 \\
            \hline
        \end{tabular}
        }
        \caption{Scores of dataset}
        \label{tab:attribute_cluster_scores}
    \end{subfigure}%
    \hfill%
    \begin{subfigure}[c]{0.48\linewidth}
        \centering
        \includegraphics[clip,trim={0.5cm 0.5cm 0.5cm 0.5cm},width=0.6\linewidth]{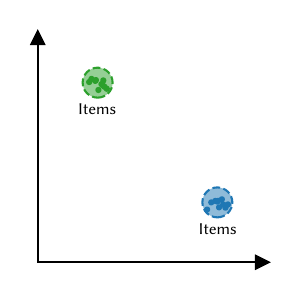}
        \caption{Dataset visualized}
        \label{fig:attribute_cluster_graph}
    \end{subfigure}%
    \caption{Example showing two distant clusters in the space of attributes can map to similar scores, suggesting they are all part of a single dense region in the ranking}
    \label{fig:attribute_clustering}
\end{figure}

The local stability of a tuple $t$ is defined with respect to a given $k$, defining the range of positions with respect to which local stability is evaluated. This parameterization allows our definition of local stability to be useful in the presence of \emph{dense regions} in the ranking. Thus far, we have focused on computing the local stability of a given tuple in a ranking. In our running example, the local stability measure evaluates how well a particular university aligns with its assigned position in the ranking. An interesting alternative perspective is to explore, for a given university, the potential bounds on its possible position within the ranking, based on the local stability measure. Namely, identifying the dense region around a tuple in the ranking, if it is in one.

Intuitively, in order to identify dense regions, we need to cluster items in the ranking with similar utilities. However, this may be impossible when the ranking is not score-based (\eg some list-wise LtR models \cite{STARank}), so there is no meaningful measure of utility for a given item. Another na\"{i}ve solution involves clustering based on the tuples' attributes. However, this approach assumes that all tuples ranked similarly must be similar in their attributes, which may not always be the case, as we next illustrate.
\begin{example}
\Cref{tab:attribute_cluster_scores} shows items consisting of two attributes $x$ and $y$ ranked in descending order of $x+y$, each represented as a point in \Cref{fig:attribute_cluster_graph}. As demonstrated on \Cref{fig:attribute_cluster_graph}, two clusters may be observed based on their attribute values depicted as blue and green dashed circles in the figure. However, their scores, as shown in \Cref{tab:attribute_cluster_scores}, are all similar, indicative of all items belonging to a single dense region. 
\end{example}

\begin{algorithm}[t]
\small
\SetAlgoLined
\DontPrintSemicolon
\KwIn{Database $D$, a tuple $t \in D$, a ranking function $f$, a set of reasonable changes $\resChanges$ and a sample budget $N$}
\KwOut{A value $k$ (the size of the dense region of $t$)}
$S \gets \emptyset$\;\label{line:firstPartStart}
\For{$i \gets 1$ \KwTo $N$}{
    $\varepsilon \gets \mathsf{Sample}(\mathcal{U}(\resChanges))$\;
    $t' \gets \varepsilon(t)$\;
    $S \gets S \cup \{(\varepsilon, \Delta_{f(D)}(t, t')) \}$\;
}\label{line:sampleEnd}

$k^* \gets \max(\{p \mid (\varepsilon, p) \in S \})$ \Comment*{Largest change of positions induced by samples from $\resChanges$}\label{line:max-k}
$\widehat{\stability}(t, k^*) \gets 1$\;  \label{line:max-k_stability}
\For{$k \gets 0$ \KwTo $k^*-1$}{\label{line:startEstLoop}
$S \gets S \setminus \{(\varepsilon, p) \in S \mid {p \leq k \wedge (\not\exists (\varepsilon', p') \in S,~~(p' > k) \wedge (\varepsilon' \preceq \varepsilon))} \}$\;\label{line:removeFromS}
$\widehat{\stability}(t, k) \gets 1 - \frac{|S|}{N}$\; \label{line:estimateStability}
}\label{line:firstPartEnd}
$d_0 \gets \widehat{\stability}(t, 0)$\;\label{line:secondPartStart}
\For{$k \gets 1$ \KwTo $k^*$}{
$d_k \gets \widehat{\stability}(t, k) - \widehat{\stability}(t, k - 1)$\label{line:stbDiff}
}
$C_s, C_l \gets \textsf{Cluster}(d_0, \dots, d_k)$ \Comment*{$\textsf{Cluster}$ returns a partitioning of $d_0, \dots, d_k$ into two groups: small and large}\label{line:cluster}
\Return $\min_{k \in [k^*], d_k \in C_l} k$\label{line:secondPartEnd}
\caption{\denseRegionAlgo}
\label{alg:detect}
\end{algorithm}

To this end, we propose \textbf{\denseRegionAlgo}, a heuristic for computing a value $k$ for a given tuple in the database, following the intuitive meaning of dense regions viewed through the lens of our definition of local stability.
The high-level idea of \denseRegionAlgo is to compare the differences in the stability values of a given tuple for different $k$ values and identify the last $k$ value before a large difference occurs. The pseudocode of \denseRegionAlgo is shown in Algorithm~\ref{alg:detect}. The algorithm first estimates the local stability for different values of $k$ (lines~\ref{line:firstPartStart}-\ref{line:firstPartEnd}). Then (lines~\ref{line:secondPartStart}-\ref{line:secondPartEnd}), the algorithm uses clustering to partition the differences in the computed stability values into small and large changes, and returns as suggested $k$ the position where the first large difference was observed. We next provide details on each phase.

\paragraph{\textbf{Estimating local stability for multiple values of $k$}}
To determine the first value of $k$ for which we observe a large difference in estimated local stability, we first need to estimate the local stability for different values of $k$. The challenge here is to do so efficiently. Evaluating \algoName for all possible values of $k$ is unnecessarily expensive when we only need rough estimates of the local stability. 

In order to do so, \denseRegionAlgo samples refinements according to a given sample budget $N$ and stores each sampled refinement $\varepsilon$ together with the change in position $\Delta_{f(D)}(t, \varepsilon(t))$ in a set of samples $S$ (lines~\ref{line:firstPartStart}-\ref{line:sampleEnd}). After sampling, the algorithm determines the maximum change in positions induced by any of the samples and denoted by $k^*$ (line~\ref{line:max-k}). These samples are used to estimate the local stability for every $k$ between $0$ and $k^*$, denoted by $\widehat{\stability}(t, k)$. $k^*$ is assumed to be the maximum change in positions induced by \emph{any} refinement contained in the set of reasonable changes $\resChanges$ (which is true for a sufficiently large number of samples), thus the estimated stability for $k^*$ is $1$ (line~\ref{line:max-k_stability}). 
For every value of $k$ from $0$ to $k^*-1$ the stability is estimated using the set of samples $S$. Intuitively, the stability value for a given $k$ can be estimated as the number of samples $(\varepsilon, p)$ in $S$ such that $\varepsilon$ is $k$-stable (namely, $p\leq k$), and there is no $(\varepsilon', p')\in S$ such that $\varepsilon'\preceq\varepsilon$ and $\varepsilon'$ is $k$-unstable (namely, $p' > k$), divided by the total number of samples $N$. 
Note that the estimated $(k+1)$-stable zone boundary is above the estimated $k$-stable zone boundary. 
Thus, the estimation can be done iteratively (lines~\ref{line:startEstLoop}-\ref{line:firstPartEnd}) as follows: for each $k$, we remove from $S$ the $k$-stable refinements not containing any sampled $k$-unstable refinement (line~\ref{line:removeFromS}) and estimate the local stability for $k$ as $1-\frac{|S|}{N}$ (line~\ref{line:estimateStability}).

\paragraph{\textbf{Clustering-based suggestion of dense region}}
Given an estimation of the local stability for the observed values of $k$, according to our notion of a dense region, the extent of the dense region in which the tuple lies is the first value of $k$ for which the succeeding value of $k$ has a significantly larger value of local stability. Therefore, \denseRegionAlgo computes the difference in stability (lines~\ref{line:secondPartStart}~-~\ref{line:stbDiff}) for each value of $k$, $d_k$, as the local stability for $k$ minus the local stability for $k-1$ (and fixing $d_0 = \widehat{\stability}(t, 0)$). It then employs a Fisher-Jenks natural breaks~\cite{jenks} clustering to partition the differences into two groups, $C_s$ and $C_l$ of small and large differences, respectively (lines~\ref{line:cluster}). Finally, \denseRegionAlgo returns $k$, where $k$ is the smallest value such that $d_k$ is in the set of large differences $C_l$ (line~\ref{line:secondPartEnd}).

\section{Experiments}
\label{sec:experiments}

In this section, we present an experimental evaluation. We start with two case studies whose goal is to demonstrate the usefulness of our local stability definition. We then study the effect of the sample budget $N$, the confidence level $\delta$, and the confidence range $\eta$ on the performance of \algoName. We examine the performance of \denseRegionAlgo using real and synthetic data, and show the scalability of the algorithm and the usefulness of our proposed optimizations. We conclude with a comparison to the definition of stability that was presented in~\cite{AJMS18}. 

\subsection{Experimental Setup}
\label{sec:exSetup}

\paragraph{\textbf{Implementation details}} 
Our experiments were evaluated on Google Cloud Compute Engine \texttt{c4d-highcpu-4} servers, with 4 vCPUs (AMD EPYC 9005 series) and 7 GB of memory.
We implement the algorithms in Python 3.13, using Polars\footnote{\url{https://pola.rs}} 1.29 to load and manipulate data. We repeat each experiment $10$ times, and show the averages of the quantities of interest (\eg runtime, estimated ($\alpha$-)local stability) in the figures.

We implement the uniform sampling over the refinements under the estimated boundary $Sb$, used in the verification and optimized construction step of \algoName, by rejection sampling. 
More concretely, we sample uniformly from $\resChanges$, and reject refinements containing any refinement of $Sb$. 
The efficiency of rejection sampling can be poor when the volume of the area we want to sample from is relatively small compared to $\resChanges$. 
Therefore, we use a threshold over the estimated volume $\tau_v$, such that when the estimated ratio between the volume of the area and its containing hyperrectangle is below $\tau_v$ at the end of the construction step, we skip the verification step and stop early. We use $\tau_v = 0.05$, as improving the estimation below this point yields diminishing returns with respect to the verification cost. We evaluated 10,560 experiments in total, where the verification step was skipped due to $\tau_v$ in 7.6\% of them.

We evaluate \algoName with all optimizations described in \Cref{sec:optimizations} enabled and compare its performance to the basic version of \algoName without optimizations as presented in \Cref{sec:computing}.

\paragraph{\textbf{Parameters setting}}
For the construction step, we set the total number of samples to be at most 750,455, where in each iteration 20,000 samples are taken when the maximal number of iterations $\ell$ is $20$.
We chose this number empirically from the sample budget parameter study in \Cref{sec:performance-vs-parameters}. For the verification step, unless otherwise specified, we set the bound on $\alpha$ to be $0.05$ and the confidence bound $\delta = 0.05$. That is, we are satisfied if we are 95\% sure that 95\% of refinements contained in the estimated boundary are $k$-stable, for the given value of $k$. In order to have a decent chance of attaining $\alpha \leq 0.05$, we set the confidence range $\eta = 0.01$.

\paragraph{\textbf{Datasets \& ranking functions}}
We evaluated our algorithms using both real and synthetic datasets, utilizing different types of ranking functions as follows.

\textbf{NBA (2023-2024):} We consider a ranking of the top-100 players according to a popular sports website \cite{nba-ranking}. Since the ranking methodology is not provided, we learn a ranking function from the reference ranking using player statistics from the 2023-2024 season (collected from \cite{nba-stats}) as the input features. We use LightGBM \cite{lightgbm} to learn a regressor based on the same five statistics used in \cite{whynotyet}: points (\texttt{PTS}), total rebounds (\texttt{TRB}), assists (\texttt{AST}), steals (\texttt{STL}), and blocks (\texttt{BLK}). For these five statistics, we consider the season totals since we are interested in ranking overall season performance. By default, we let the set of reasonable changes ($\resChanges$) be approximately $5\%$ of the maximum difference in each attribute. 

\textbf{CSRankings:} We use a dataset and ranking function from \cite{sharp} (based on \cite{CSRankings}), comprising 188 universities and the counts of their faculty and publications for four different fields of computer science: artificial intelligence (\texttt{AI}), systems (\texttt{Sys}), theory (\texttt{Thry}), and interdisciplinary research (\texttt{Intdsc}). 
 The count of publications is adjusted to account for the number of authors in each publication by dividing the credit equally, \eg a paper with $2$ authors from different institutions counts as $0.5$ publications for each of the involved institutions (\ie fractional changes in the publication count for a given institution are valid and meaningful). The universities are ranked in descending order of their scores, which are computed as a variation of 
 geometric mean over the adjusted publication counts in each area--specifically,
$$\varphi(t) = \sqrt[27]{(t.\texttt{AI} + 1)^{5} \cdot (t.\texttt{Sys} + 1)^{12} \cdot (t.\texttt{Thry} + 1)^{3} \cdot (t.\texttt{Intdsc} + 1)^{7}}$$
We let $\resChanges$ be approximately $5\%$ of the maximum difference in each attribute--about $\pm 4$ adjusted publication count in \texttt{AI} and $\pm 1$ adjusted publication count in the other fields. 

\textbf{Synthetic:} We generate synthetic datasets with dense regions specifically in mind. In order to generate a synthetic dataset of $N$ tuples and $d$ attributes, we iteratively generate dense regions with a random number of tuples until $N$ tuples are in the dataset. Each dense region is assigned a score $s$, which is separated from the preceding dense region by a constant margin $c$. We then draw tuples from a multivariate Gaussian distribution whose mean is the one-vector scaled by $\frac{s}{d}$, and the covariance matrix is a scaled identity matrix. We rank the resulting dataset by the ranking function, sorting the tuples in descending order of the sum of their attributes. We let $\resChanges$ be the vector whose components are proportional to $\frac{c}{d}$, ensuring that local stability values are comparable across varying attribute counts for a fixed margin $c$.


\subsection{Case Studies}
\label{sec:case-studies}
\begin{figure}[t!]
    \begin{subfigure}[t]{\linewidth}
        \centering
        \includegraphics[clip,trim={0.25cm 0.25cm 0.25cm 0.25cm},width=\linewidth]{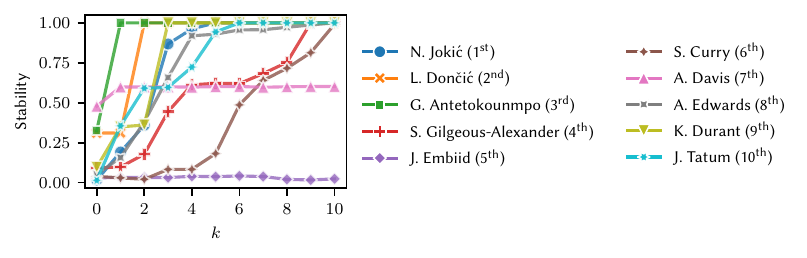}
        \caption{Effect of $k$ on stability values}
        \label{fig:nba-stability-vs-k}
    \end{subfigure}
    \begin{subfigure}[t]{\linewidth}
        \centering
        \includegraphics[clip,trim={0.25cm 0.25cm 0.25cm 0.25cm},width=\linewidth]{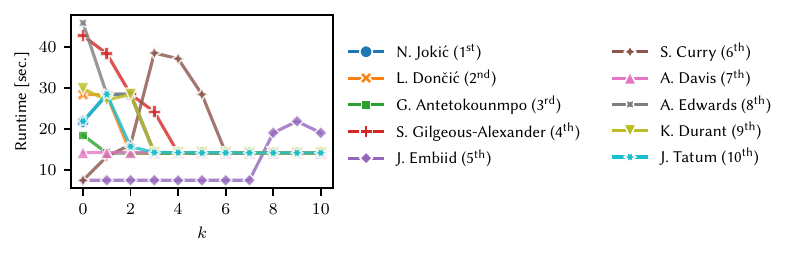}
        \caption{Effect of $k$ on runtime}
        \label{fig:nba-runtime-vs-k}
    \end{subfigure}
    \caption{Case study results for top-10 NBA players}
    \label{fig:nba-varying-k}
\end{figure}

We start with two case studies on the NBA and CSRankings datasets.
\subsubsection{Case Study: NBA Player Rankings}\label{sec:nba-case-study} Rankings are frequently featured in sports, and can drive decision-making in a number of contexts, \eg in fantasy leagues, drafts, and estimating the strength of a given team. In this case study, we investigate the local stability of NBA players in the 2023-2024 season as ranked by a learned ranking function, with the goal of ranking players by their performance that season. Since the top of the ranking receives significantly more attention \cite{power-of-rankings,ecommerce-ranking}, we focus on the local stability of the top-10 ranked players.
The local stability of the top-10 players for different values of $k$ is depicted in~\Cref{fig:nba-stability-vs-k}.

Assume we wish to choose the season's most valuable player (MVP). According to the ranking, a natural choice would be the top-ranked player of this season, Nikola Joki\'{c}. However,~\Cref{fig:nba-stability-vs-k} shows that his ranking in 1\textsuperscript{st} is very unstable with local stability of $0.02$ for $k=0$. Indeed, with minor modifications to his total statistics that season, \eg $+10.04$ in \texttt{PTS}, $-3.04$ in \texttt{TRB}, $-0.56$ in \texttt{AST}, $-1.01$ in \texttt{STL}, and $-1.11$ in \texttt{BLK}, he would be ranked 2\textsuperscript{nd}. Notice that these changes are particularly small in comparison to his overall season totals: in this season, Joki\'{c} had 2,085 \texttt{PTS}, 976 \texttt{TRB}, 708 \texttt{AST}, 108 \texttt{STL}, and 68 \texttt{BLK}. Furthermore, his ranking among the top-3 is also unstable, with local stability of $0.36$ for $k=2$. This suggests that naming Joki\'{c} as this season's MVP may not be well-founded under this ranking function. Based on this ranking function, we may instead conclude that choosing Luka Don\v{c}i\'{c} is more justifiable given that he is certainly in the top-4 (more likely than Joki\'{c}). Alternatively, we may refine the ranking function (\eg by considering only a subset of the attributes) to obtain a ranking with higher local stability values that can be used to justify the chosen MVP. 

We observed that most of the top-10 players are not stable within their own position: the highest estimated stability when $k=0$ is $0.46$, in the case of Anthony Davis.
While most players are unstable for $k=0$, we observed that the majority of players are stable within $\pm 3$ ranks of their initial ranking. Therefore, we may infer this particular ranking function is reasonably locally stable for the players in the top-10, and confirm that it is overall \emph{well-founded}: namely, highly-ranked players remain highly-ranked even under small modifications in their statistics for this season. 
A notable exception is Joel Embiid, whose stability is very low for all evaluated values of $k$. This suggests that the learned ranking function has overfit to Embiid's statistics, as even very small modifications to his statistics cause him to drop out of the top-10 players. This interpretation is supported by the fact that Embiid played far fewer games than other players in the 2023-2024 season (only 39, compared to more than 70 for most other players), due to injuries. Therefore, his total statistics for the season are much lower than they would have been otherwise, making them more similar to lower-ranked players. 

\Cref{fig:nba-runtime-vs-k} shows the runtime of (the optimized) \algoName for different values of $k$ for each of the top-$10$ players.
Broadly speaking, the runtime is decreasing for increasing values of $k$. Intuitively, this is explained by the fact that there are necessarily more $k$-unstable refinements for lower values of $k$, often resulting in more iterations taking place in \algoName to produce a good enough estimate. 
Notable exceptions are Stephen Curry and Joel Embiid, in which the runtime starts low (because they are below the estimation volume threshold $\tau_v$), rises when just above $\tau_v$, and then decreases again once their local stability increases. When compared to the basic version of \algoName, as presented in \Cref{sec:computing}, our optimized algorithm is $51.6\times$ faster in the best case, and $25.4\times$ faster on average. We show the full results in \Cref{fig:varying_k_noopts} in \Cref{sec:apdx-addtl-expts}.


\subsubsection{Case Study: CSRankings}
\label{sec:csrankings-case-study}
\begin{figure}[t!]
    \begin{subfigure}[t]{\linewidth}
        \centering
        \includegraphics[clip,trim={0.25cm 0.25cm 0.25cm 0.25cm},width=0.9\linewidth]{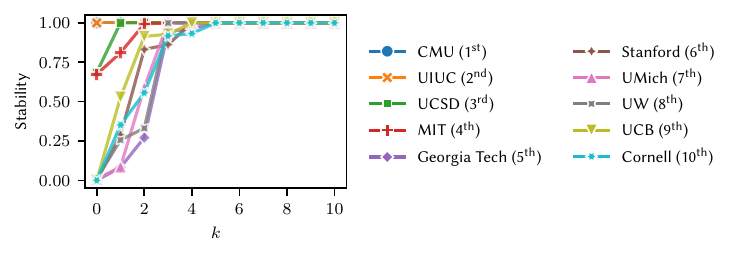}
        \caption{Effect of $k$ on stability values}
        \label{fig:csrankings-stability-vs-k}
    \end{subfigure}
    \begin{subfigure}[t]{\linewidth}
        \centering
        \includegraphics[clip,trim={0.25cm 0.25cm 0.25cm 0.25cm},width=0.9\linewidth]{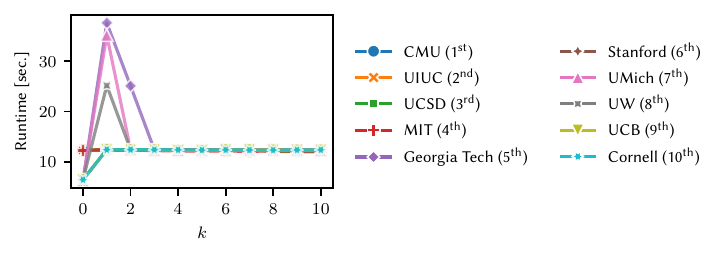}
        \caption{Effect of $k$ on runtime}
        \label{fig:csrankings-runtime-vs-k}
    \end{subfigure}
    \caption{Case study results for top-10 universities}
    \label{fig:csrankings-varying-k}
\end{figure}

Rankings also feature prominently in higher education, and have been the subject of significant study in order to develop methodologies that accurately reflect the perceptions of the quality of the institutions when compared with each other \cite{nrc-ranking,goto-rankings-considered-helpful}. We consider evaluating the local stability of institutions in a more specialized ranking, CSRankings \cite{CSRankings}, which ranks institutions based on publication metrics in computer science conferences. Given the focus placed on high positions in the ranking, we again consider the local stability of the top-10 universities.

\Cref{fig:csrankings-stability-vs-k} shows that the ranking is substantially locally stable. In fact, the top 2 universities, Carnegie Mellon University (CMU) and the University of Illinois at Urbana-Champaign (UIUC), are estimated to be completely locally stable: no change made to them from the set of reasonable changes causes them to rank outside of 1\textsuperscript{st} or 2\textsuperscript{nd}, respectively. For the remaining universities, we see that for $k=3$, all of the universities have an estimated local stability of more than $\frac{1}{2}$.
 Furthermore, from $k \geq 5$, all of the universities are estimated to be completely locally stable. This lends credence to the claim of these universities fielding the top-ranked computer science departments, since small changes in the ranking do not result in drastic changes in their position. 

\Cref{fig:csrankings-runtime-vs-k} shows the runtime of (the optimized) \algoName depending on the value of $k$.
 As expected, the runtime in general decreases with $k$. 
 As in the NBA case study, there are universities with estimated stabilities just above the threshold $\tau_v$, leading to higher runtimes for those values of $k$. We observed a speedup of up to $35.2\times$ in the best case, and of $19.1\times$ on average compared to the performance of the basic version of \algoName. We show the full results in \Cref{fig:varying_k_noopts} in \Cref{sec:apdx-addtl-expts}.

\subsection{Effect of Parameters}\label{sec:ex-paremeters}

The next set of experiments study the effect of the sampling budget $N$ and verification parameters $\delta$ (confidence level) and the $\eta$ (confidence range) on the running time of \algoName and the computed stability values. The observed effect of the parameters on the stability values were negligible, and thus are shown only in \Cref{sec:apdx-addtl-expts}.
For each dataset, we set $k$ such that a wide variety of values of local stability are observed across different tuples. Specifically, for NBA we chose $k=5$, and $k=1$ for CSRankings.

\label{sec:performance-vs-parameters}

\paragraph{\textbf{Sample budget}}
\begin{figure}[t]
    \begin{subfigure}[t]{\linewidth}
        \centering
        \includegraphics[clip,trim={0.25cm 0.25cm 0.25cm 0.25cm},width=\linewidth]{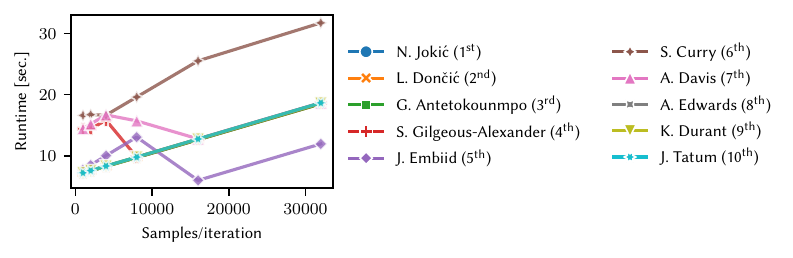}
        \caption{NBA}
        \label{fig:nba-runtime-vs-sample-budget}
    \end{subfigure}
    \begin{subfigure}[t]{\linewidth}
        \includegraphics[clip,trim={0.25cm 0.25cm 0.25cm 0.25cm},width=0.9\linewidth]{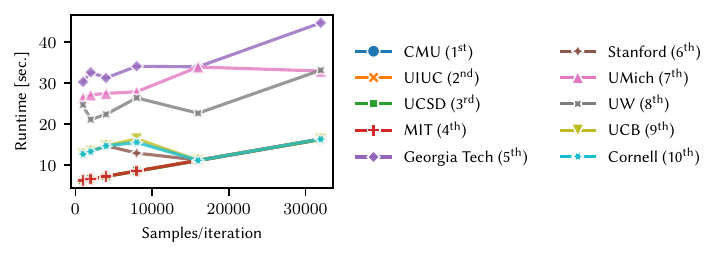}
        \caption{CSRankings}
        \label{fig:csrankings-runtime-vs-sample-budget}
    \end{subfigure}
    \caption{Effect of the number of samples per round taken by the construction step on runtime}
    \label{fig:runtime-vs-sample-budget}
\end{figure}

\Cref{fig:runtime-vs-sample-budget} depicts the running time for increasing sampling budget $N$ between 1,000 and 32,000. We observed a tradeoff in running time between small and large per-iteration sample budgets.
When the number of samples in a single iteration is insufficient to achieve the desired $\alpha$, multiple iterations are required, incurring the cost of multiple executions of the verification step. This phenomenon is more pronounced for tuples with low values of local stability (\eg Cornell in \Cref{fig:csrankings-runtime-vs-sample-budget}). 

\paragraph{\textbf{Verification parameters ($\eta, \delta$)}}

\begin{figure}[t]
    \begin{subfigure}[t]{\linewidth}
        \centering
        \includegraphics[clip,trim={0.25cm 0.25cm 0.25cm 0.25cm},width=\linewidth]{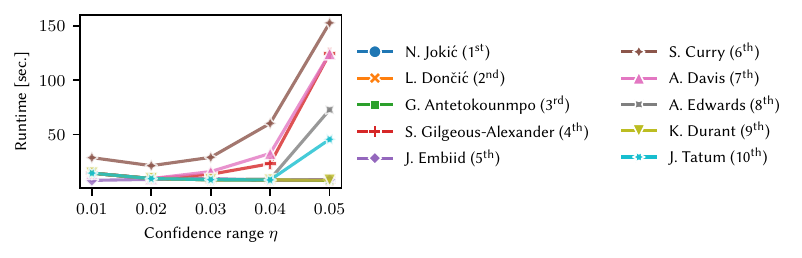}
        \caption{NBA}
        \label{fig:nba-runtime-vs-eta}
    \end{subfigure}
    \begin{subfigure}[t]{\linewidth}
        \includegraphics[clip,trim={0.25cm 0.25cm 0.25cm 0.25cm},width=0.9\linewidth]{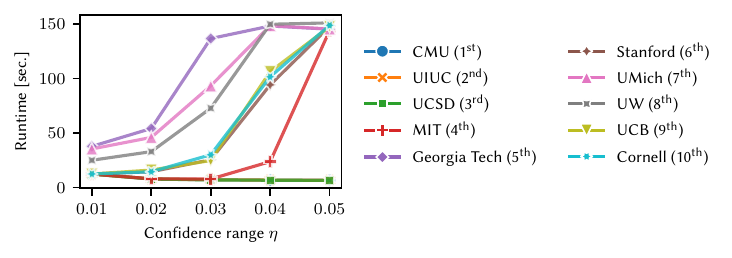}
        \caption{CSRankings}
        \label{fig:csrankings-runtime-vs-eta}
    \end{subfigure}
    \caption{Effect of the confidence range $\eta$ on runtime}
    \label{fig:runtime-vs-eta}
\end{figure}

\begin{figure}[t]
    \begin{subfigure}[t]{\linewidth}
        \centering
        \includegraphics[clip,trim={0.25cm 0.25cm 0.25cm 0.25cm},width=\linewidth]{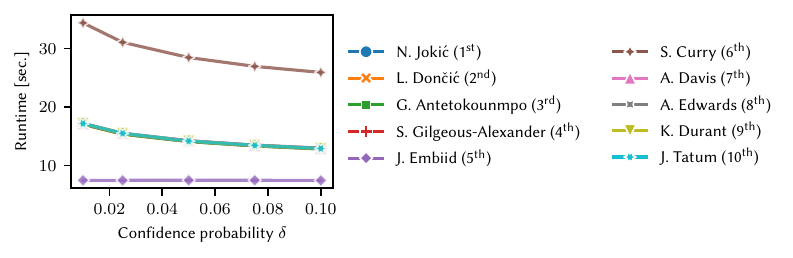}
        \caption{NBA}
        \label{fig:nba-runtime-vs-delta}
    \end{subfigure}
    \begin{subfigure}[t]{\linewidth}
        \includegraphics[clip,trim={0.25cm 0.25cm 0.25cm 0.25cm},width=0.9\linewidth]{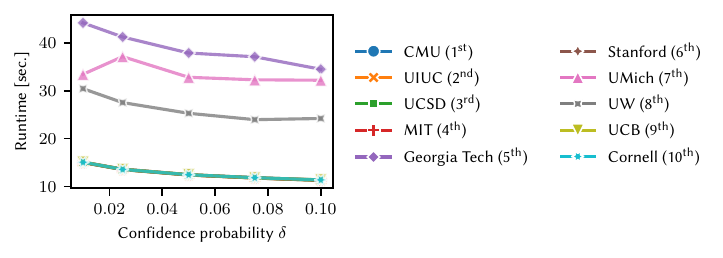}
        \caption{CSRankings}
        \label{fig:csrankings-runtime-vs-delta}
    \end{subfigure}
    \caption{Effect of the confidence probability $\delta$ on runtime}
    \label{fig:runtime-vs-delta}
\end{figure}

To assess the effect of $\eta$, we vary its value from $0.01$ to $0.05$. The results are shown in \Cref{fig:runtime-vs-eta}. We observed an increase in running times. This can be explained as follows. Recall that $\alpha = \widehat{p} + \eta$ where $\widehat{p}$ is the estimated probability that the computed $Sb$ contains an unstable refinement as defined in \Cref{prop:hoeffding-pac}. Thus, when $\alpha$ is fixed, as $\eta$ increases, the acceptable upper bound on $\widehat{p}$ decreases, meaning the portion of sampled unstable refinements under $Sb$ should decrease. Achieving a lower $\widehat{p}$ may require an increase in the number of samples, and as a result growing number of iterations, which incurs higher computation time.
Finally, \Cref{fig:runtime-vs-delta} shows the running time for values of $\delta$ from $0.01$ to $0.1$. As expected, the runtime decreases with increasing $\delta$ values, \ie for low confidence level, the running time is lower. 


\subsection{Dense Region Detection}
\label{sec:expts-dense-regions}
We next evaluate \denseRegionAlgo\ in terms of accuracy and runtime. To this end, we use the CSRankings dataset and the Synthetic dataset to assess the quality of the output given the ground truth over dense regions in the data. We do not consider the NBA dataset in this experiment, as the learned ranking function treats items with similar statistics in a substantially disparate manner (\eg the case of Joel Embiid in the NBA dataset), leading to the absence of clear dense regions in the rankings.

\Cref{tab:csrankings-top10} presents the top-10 universities in CSRankings along with their computed score. We observe a clear separation between different groups in the data based on their scores, which is highlighted in the table. Specifically, there is a notable gap in the scores for tuples ranked 1\textsuperscript{st}-4\textsuperscript{th}, whereas tuples ranked  5\textsuperscript{th}-8\textsuperscript{th} and tuples ranked  9\textsuperscript{th}-10\textsuperscript{th} have a more comparable score to each other, forming dense regions in the data. \emph{E.g.,} with only one less adjusted publication count in systems, Georgia Tech would be ranked 8\textsuperscript{th}.

The evaluation of \denseRegionAlgo\ was able to identify the dense regions. In particular, the $k$ value computed for each one of the first four tuples was $0$, indicating their stability with respect to their surrounding tuples. For Stanford, the $k$ value
computed by \denseRegionAlgo\ was $1$ while one might
 reasonably expect $k=2$ so as to include the University of Washington (UW). Indeed, the local stability value for $k=0$ is $7 \cdot 10^{-6}$, for $k=1$ is $0.29$ and for $k=2$ is $0.83$, \ie an increase of $0.29$ from $k = 0$ to $k=1$, while the increase from $k=1$ to $k = 2$ is $0.54$, which is significantly larger. However, the former is classified as being large by the clustering algorithm due to the smaller differences between other values of $k$ (which are all less than $0.15$). The observed $k$ values for the rest of the tuples matched the dense regions as they are depicted in \Cref{tab:csrankings-top10}. These results highlight the usability of \denseRegionAlgo\ to detect dense regions in the rankings as a reasonable approach to solving the problem. When accounting for dense regions, the local stability of the ranking becomes evident. Specifically, when choosing $k$ to span the dense area of the universities, we find that all the universities (except Cornell) have a local stability of at least $\frac{1}{2}$. In particular, Cornell has lower stability within its dense region than UC Berkeley (UCB) since UCB is buoyed by their relative strength in systems publications while Cornell is not, therefore making it feasible for Cornell to rank 11\textsuperscript{th}. 
 
\begin{table}[t]
    \caption{Top-10 computer science departments according to CSRanking \cite{sharp}, with dense regions highlighted}
    \scalebox{0.83}{
        \small
        \begin{tabular}{ccrrrr:r}
        \hline
        & \textbf{University} & \textbf{AI} & \textbf{Sys} & \textbf{Thry} & \textbf{Intdsc} & \textbf{Score $\downarrow$} \\
        \hline
        1\textsuperscript{st} & CMU & 71.4 & 11.9 & 21.1 & 13.8 & 19.53 \\
        2\textsuperscript{nd} & UIUC & 46.1 & 12.6 & 16.0 & 7.2 & 15.39 \\
        3\textsuperscript{rd} & UCSD & 31.6 & 9.0 & 10.1 & 10.3 & 13.00 \\
        4\textsuperscript{th} & MIT & 28.1 & 8.6 & 16.2 & 7.9 & 12.33 \\
        \rowcolor{CornflowerBlue!20}5\textsuperscript{th} & Georgia Tech & 28.5 & 7.8 & 6.9 & 10.2 & 11.58 \\
        \rowcolor{CornflowerBlue!20}6\textsuperscript{th} & Stanford & 36.7 & 5.4 & 13.3 & 11.5 & 11.56 \\ 
        \rowcolor{CornflowerBlue!20}7\textsuperscript{th} & UMich & 30.4 & 9.0 & 9.3 & 5.9 & 11.26 \\
        \rowcolor{CornflowerBlue!20}8\textsuperscript{th} & UW & 28.0 & 6.2 & 12.2 & 10.0 & 11.13 \\
        \rowcolor{Peach!20}9\textsuperscript{th} & UCB & 23.2 & 7.4 & 15.9 & 6.4 & 10.69 \\
        \rowcolor{Peach!20}10\textsuperscript{th} & Cornell & 42.0 & 5.7 & 12.8 & 6.8 & 10.66 \\
        \bottomrule
        \end{tabular}
    }
    \label{tab:csrankings-top10}
\end{table}

To further evaluate the performance of \denseRegionAlgo, we utilized the Synthetic dataset with $d=2$ and $100$ tuples, which was generated with dense regions and thus provides us with ground truth for $k$ values that cover the dense region to which the tuple belongs.
We executed \denseRegionAlgo\ on all $100$ tuples of the Synthetic dataset and compared the output $k$ value for each tuple with the ground truth value of $k$. 
For all $100$ tuples, \denseRegionAlgo\ was correctly able to recommend the value of $k$ that fits the dense region according to the data generation process. 

\paragraph{\textbf{Efficiency of \denseRegionAlgo}}
To show the efficiency of \denseRegionAlgo, we compared the runtime of \denseRegionAlgo\ to that of a variant where the stability values are computed using \algoName.  
The evaluation was done on all $100$ tuples of the Synthetic dataset. The total runtime for \denseRegionAlgo was 7 minutes, compared to 141 minutes when using \algoName for the local stability values computation. We observed a $20.3\times$ improvement overall, where \denseRegionAlgo was up to $32\times$ faster with an average of $20.4\times$ improvement.


\subsection{Scalability \& Optimizations}\label{sec:opt_exp}
We next examine the scalability of the optimized version of \algoName (LSt) compared to the basic version presented in \Cref{sec:computing} (Basic), and quantify the effect of our proposed optimizations.

\paragraph{\textbf{Scalability}}
We evaluate the scalability of our methods with growing data sizes and numbers of attributes using synthetic datasets. \Cref{fig:runtime-vs-datasize} shows the runtime of both the optimized and basic versions of \algoName as a function of the data size. 
As expected, LSt outperforms Basic. We observed a linear increase in the Basic's running time, with increasing data size, while the impact on LSt's running time was negligible. This confirms that the dependence on data size can be eliminated for tuple-independent ranking functions by the optimization to reduce the re-ranking cost.

The runtime as a function of the number of attributes is depicted in \Cref{fig:runtime-vs-num-attrs}. We note that the running time can be affected by the stability of the tuples, \ie tuples with stability below the threshold $\tau_v$ typically incur shorter runtimes due to early termination. Due to the randomness involved in the data generation process and tuple selection, it is likely that each evaluated dataset has a mixture of high and low stabilities.
Consequently, we report the maximum observed running time per tuple. For Basic, the runtime increases linearly with the number of attributes as each additional attribute incurs more overhead in \eg computing $Min_\prec$ of a set of refinements. Furthermore, the runtime of LSt increases exponentially with the number of attributes, as additional iterations must be performed in order to attain the desired $\alpha$. Interestingly, for more than 6 attributes, the performance of LSt degrades and becomes slower than Basic in the worst cases. In these cases, LSt is performing close to the maximum number of allotted iterations, and incurs overhead from \eg the increasing cost to perform the rejection sampling as the estimated stable zone shrinks. However, we emphasize that this is only true in the worst case (\Cref{fig:runtime-vs-num-attrs} shows \emph{max} runtimes). On average, LSt is $11.8\times$ faster for 6 to 10 attributes.

\paragraph{\textbf{Ablation Study}}
We compare the runtime of different variants of \algoName: (LSt) \algoName\ equipped with all the optimizations, (LSt-I), denoting \algoName\ where the iteration optimization for bounded $\alpha$ is disabled, (LSt-C), denoting \algoName\ without the optimization for reducing the set of reasonable changes, (LSt-R), denoting \algoName\ with no reduction of the re-ranking cost, and (Basic) denoting \algoName with no optimizations enabled, as described in \Cref{sec:computing}.
To demonstrate the difference in preference, we report the overall running time for computing the stability values for every tuple in the top-$10$ for all values of $0\leq k\leq 10$ for each variant of the algorithm. The results are shown in~\Cref{fig:ablation_study}.

For the NBA dataset, the basic version was overall $28.6\times$ slower than LSt and $13.2\times$ slower for CSRankings. Interestingly, the performance gap between Basic and LSt-I is negligible on CSRankings, whereas on the NBA dataset Basic is $1.51\times$ slower. This is because Basic and LSt-I draw the same number of samples, but LSt-I additionally incorporates an optimization that reduces re-ranking costs, which is more effective for the NBA dataset than CSRankings due to the higher cost of the learned ranking function used for the former.
The use of iterative computation to achieve a bounded $\alpha$ value was shown to be the most effective, improving the runtime for a single tuple by a factor of $37\times$ in the best case, and $18.9\times$ overall for the NBA dataset, and $13.2\times$ for CSRankings.

A notably smaller effect was observed for the reduction of reasonable changes optimization: no change in the NBA dataset, and $1.23\times$ faster for CSRankings. The higher effect on CSRankings is because small refinements to a \emph{single} attribute can render a tuple locally unstable in this dataset, whereas in the NBA dataset, relatively complex interactions between attributes are learned by the LtR model, and so typically refinements to multiple attributes are necessary to move the tuple in the ranking. Additionally, we note that the ranking function in CSRankings is monotone, allowing for more efficient computation of the reduced~\resChanges. 
 
Finally, we see a higher gain for the optimization of reducing the cost of re-ranking in the NBA dataset, $1.51\times$ in total runtime compared to no change in CSRankings. This is due to the fact that the ranking function's runtime depends on the data size. For CSRankings, the optimization makes almost no difference, since the runtime of evaluating a simple scoring function for a small dataset is negligible when compared to the other phases of \algoName. 

\begin{figure}[t]
    \begin{minipage}{0.48\linewidth}
        \centering
        \includegraphics[clip,trim={0.25cm 0.25cm 0.25cm 0.2cm},width=0.76\linewidth]{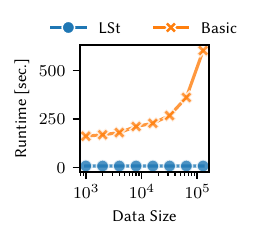}
        \caption{Data Size}
        \label{fig:runtime-vs-datasize}
    \end{minipage}\hfill%
    \begin{minipage}{0.48\linewidth}
        \centering
        \includegraphics[clip,trim={0.25cm 0.25cm 0.25cm 0.2cm},width=0.78\linewidth]{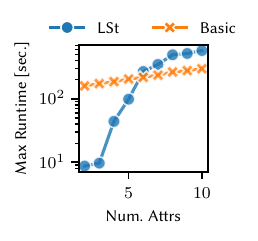}
        \caption{Num. Attrs.}
        \label{fig:runtime-vs-num-attrs}
    \end{minipage}
\end{figure}

\begin{figure}[t]
    \begin{subfigure}[t]{0.45\linewidth}
        \centering
        \includegraphics[clip,trim={0.25cm 0.25cm 0.25cm 0.2cm},width=0.85\linewidth]{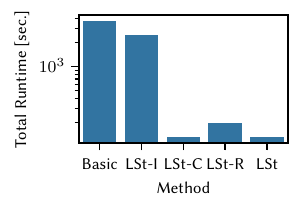}
        \caption{NBA}
        \label{fig:nba_ablation}
    \end{subfigure}
    \begin{subfigure}[t]{0.45\linewidth}
        \centering
        \includegraphics[clip,trim={0.25cm 0.25cm 0.25cm 0.25cm},width=0.76\linewidth]{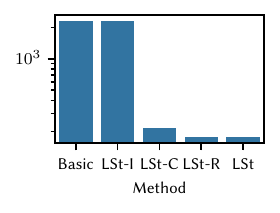}
        \caption{CSRankings}
        \label{fig:csranking_ablation}
    \end{subfigure}
    \caption{Total runtimes for experiment suites under ablation of optimizations}
    \label{fig:ablation_study}
\end{figure}

\subsection{Local vs. Global Ranking Stability}\label{sec:exp_local_vs_global}
We next demonstrate the difference between our proposed definition for local stability and the (global) stability of ranking as was defined in~\cite{AJMS18}.
We implement the Monte Carlo algorithm described in the paper for 2 dimensions, and sample 500,000 function weights to determine the global stability of the Synthetic ranking subject to changes of $a, b$ in the scoring function used to rank $\varphi(t) = a \cdot t.\texttt{x} + b \cdot t.\texttt{y}$ 
where $a, b \geq 0$. 
The original ranking, obtained by setting $a = 1$, $b = 1$, has a global stability of $0.03$ (out of $1$) while according to our definition, when taking into account the marked dense regions, most tuples are considerably locally stable (on average $0.45$) given the margins between dense regions and the set of reasonable changes described in \Cref{sec:exSetup}.

In contrast, consider a variant of the Synthetic dataset with 10 tuples, where each of the attributes of the tuple ranked $i$\textsuperscript{th} is greater by $1$ than the tuple ranked at the $i+1$ position in each attribute. Clearly, this ranking is considered perfectly stable, as each tuple dominates all the others ranked below it. However, according to our local stability measure, the stability of each tuple is at most $0.08$ for values of $k$ that don't cover the entire ranking.

\section{Related Work}
\label{sec:related-work}

\textbf{\emph{Ranking stability.}}
The stability of ranking has been studied in a line of work \cite{AJMS18,sensitivity-vectors,stability-multigroup-fairness,ranking-nutrition,mithraranking,fairly-evaluating-and-scoring}. 
Closest to our work is \cite{AJMS18}, which studies stability as the property of a \emph{ranking} with respect to a family of ranking functions, which we deemed \emph{global} stability. At a high level, stability measures how much the output may change in response to small changes in the input. Any concrete definition of stability, therefore, depends on how modifications are defined and on the notion of output effect. The specific choices for these components determine the semantics of the stability measure.
Technically, \cite{AJMS18} defines modifications as changes to the \emph{ranking function}, and considers \emph{any change} in the ranking order to constitute an output effect. In contrast, we study modifications to the \emph{data} and quantify the output effect as the \emph{magnitude of the change} in the position (\ie the value $k$) of a given tuple.
Conceptually, the stability notions in~\cite{AJMS18} are geared toward assessing how likely the (entire) ranking of the tuples is, and can be used to determine, \eg whether the ranking function was cherry-picked. In contrast, our definitions aim to measure how close tuples are in the ranking under the \emph{original ranking function}, which is especially important in scenarios where no scores are available, or the scores are non-linear in the input attributes. Note that by focusing on data changes rather than methodology, we can treat the ranking process as a black box, making our framework model-agnostic. Moreover, our definitions overlook changes within the dense region, thereby making our measure more useful when similar tuples are present in the ranking.

A form of local stability was studied in \cite{sensitivity-vectors}, however, the analysis in~\cite{sensitivity-vectors} relies on strong assumptions about both the ranking function and the dataset, which together enable a closed-form characterization of the perturbation magnitude required to alter the ranking. In particular, beyond restricting the class of ranking functions considered, their results apply only to a \emph{perfect season} dataset - one in which every team plays every other team exactly once, with no upsets: the top-ranked team defeats all others, the second-ranked team loses only to the first, and so forth. These assumptions render the analysis tractable but substantially limit its generality. Extending the framework of~\cite{sensitivity-vectors} to more general datasets or to alternative ranking functions is non-trivial and would require significant additional work. In contrast, our approach is model-agnostic and imposes no assumptions on the underlying data. Moreover,~\cite{sensitivity-vectors} does not explicitly provide a method to measure the local stability of a tuple when multiple attributes contribute to the score. Finally, we note that the~\cite{sensitivity-vectors} considers perturbations affecting exactly two tuples, while we examine perturbations made to a single tuple.

\textbf{\emph{Ranking explanations.}}
Beyond ranking stability, the problem of explaining ranking outcomes has been studied extensively~\cite{whynotyet,sharp,monotonic,local-explanations-global-rankings,synthesizing-scoring-functions,abductive-ranking-explanations}. Many of these works explain rankings by identifying feature importance~\cite{whynotyet,sharp,monotonic,local-explanations-global-rankings,synthesizing-scoring-functions}. While related to stability, such approaches do not explicitly quantify the margin between tuples in the ranking, as we propose.
In \cite{whynotyet,synthesizing-scoring-functions}, linear scoring functions are fit to a dataset to either match the top-$k$ of a given ranking or put a single tuple among the top-$k$. The weights of the resulting function may be interpreted as an explanation of feature importance. ShaRP \cite{sharp} adopts Shapley values to explain \emph{why} a given item ranks the way it does according to the contributions made by each of the individual attributes. Similarly, \cite{local-explanations-global-rankings} employs existing feature importance methods (\eg LIME \cite{LIME}) to explain rankings. The work of~\cite{monotonic} exploits monotonicity assumptions to derive importance measures of feature importance.
Orthogonally, \cite{abductive-ranking-explanations} finds sufficient subsets of attributes (called an abductive explanation) which, when agreed upon, result in the same ordering between tuples.

\textbf{\emph{Robustness verification in ML.}}
Verifying the robustness of machine learning models has been the subject of much recent attention \cite{probabilistic-robustness-dl-survey,proa,proven,towards-nn-robustness,esa-nn-robustness,scalable-dnn-verification,SAFARI}. Essentially, robustness quantifies how likely a perturbation to an input will change its prediction \cite{towards-nn-robustness}. Recent work has focused on \emph{probabilistic} definitions of robustness, similar to our definition of $\alpha$-local stability; we refer readers to \cite{probabilistic-robustness-dl-survey} for a recent survey. Many works (\eg \cite{proa,proven,esa-nn-robustness,scalable-dnn-verification,SAFARI}) employ a framework similar to ours, leveraging sampling and concentration inequalities to estimate robustness with guarantees on their reliability and accuracy \cite{esa-nn-robustness,proven,proa}. However, a key difference separates our definition from this line of work: the space of perturbations under consideration. Most often, robustness is considered as a function of an $\ell_p$-ball \cite{scalable-dnn-verification}. This presupposes a cost function on the perturbations enabling the determination of a bound on the maximum cost of changes for which the model is robust. However, our notion of stable zone boundaries replaces this assumed cost function with a weaker assumption: that the cost of a change is monotone non-decreasing. This enables including parts of the stable area which are not included in any $\ell_p$-ball (\eg the arms of an ``L''-shaped stable zone). Furthermore, in these works, the inference tasks are independent of other points (\ie classification and regression), while this is not the case in the ranking setting---local stability is not only a property of the ranking function, but depends on the other tuples in the ranking as well (\eg in the case of dense regions), motivating optimizations specific to this setting (\eg \Cref{sec:ranking-fewer-tuples}).

\textbf{\emph{Counterfactual explanations.}}
Counterfactual explanations provide concrete examples of how to achieve \emph{recourse} for an undesirable classification \cite{Wachter,DiCE,CERTIFAI,FACE,MACE,GeCo,FACET}.
Most of the time, these methods focus on providing a few actionable options for recourse based on minimizing the cost of changing the input \cite{Wachter,GeCo}. Instead, we focus on trying to characterize a broader set of options for recourse, essentially finding minimal recourse without specifying a cost function up front. Furthermore, many methods such as~\cite{DiCE,FACET} are tailored to specific models assuming white-box access, while our sampling-based approach allows us to be model-agnostic.

\textbf{\emph{Local model-agnostic explanations.}}
Local model-agnostic explanations \cite{LIME,SHAP,sharp,Anchors} answer \emph{why} a certain decision was made for the given input for any given model by treating it as a black box. In many cases, they take the form of \emph{feature importance} measures, as in \cite{LIME,SHAP,sharp}. While feature importance correlates with the local stability of a single attribute, we principally consider changes to combinations of attributes instead.

\section{Conclusion}
\label{sec:conclusion}

We recognized the need for a \emph{local view} of stability in rankings, due to the existence of dense regions in which small amounts of instability are reasonably expected and therefore tolerated. We defined this notion formally as a ratio between the refinements whose magnitude cannot incur a large change in positions to a user-defined set of reasonable changes chosen with the help of domain-specific knowledge. We showed that computing this set of changes exactly is intractable in the general case, leading us to propose a relaxed definition of local stability. We proposed \algoName, a sampling-based estimation algorithm and showed a probably-approximately-correct type guarantee by applying a concentration inequality. We further proposed \denseRegionAlgo to suggest a range of positions around a given item in the ranking with which we may reasonably expect small changes to move the item due to being ranked next to similar items. Finally, we demonstrated that our definition of local stability can provide interesting insights through case studies on real-life datasets, demonstrated the suitability of \denseRegionAlgo for its intended purpose, and provided an experimental analysis of the parameters affecting the performance of our algorithms and optimizations. 

Our work introduces many avenues for further study into the local stability of rankings. Future work may consider additional classes of refinements, such as adding or removing data, or requiring refinements to satisfy constraints (\eg denial constraints or tuples with correlated values), and handling categorical data. Extending our framework to categorical attributes may be possible in certain cases, but doing so is non-trivial, and modeling refinement magnitude appropriately depends on whether the categorical domain admits a natural order (\eg small, medium, and large sizes).


\bibliographystyle{ACM-Reference-Format}
\balance
\bibliography{main}

\appendix
\clearpage

\section{Hardness of Computing $\kStableZoneBound$}
\label{sec:apdx-hardness}

\begin{figure}
    \centering
    \hspace{-0.5cm}
    \includegraphics[width=0.4\linewidth]{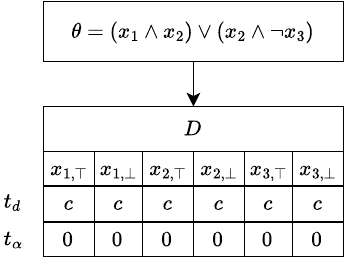}
    \caption{Example of constructed database $D$ ($c$ > 1) built from DNF formula $\theta$ for reduction in proof of \Cref{thm:counting-hardness}}
    \label{fig:reduction-construction}
\end{figure}

We begin by proving
\begin{lemma}
    \label{thm:counting-hardness}
    Given a database $D$, a tuple $t \in D$, a ranking function $f$, and a value $k$, computing $|\kStableZoneBound|$ is \textsf{\#P-hard}.
\end{lemma}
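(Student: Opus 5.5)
We reduce from \textsf{\#DNF}, as announced in the proof sketch of \Cref{thm:hardness_bound_comp}. Let $\theta = C_1 \vee \dots \vee C_m$ be a DNF formula over variables $x_1,\dots,x_n$. The goal is a polynomial-size instance $(D,t,f,k)$ in which the minimal $k$-unstable refinement magnitudes, i.e.\ $\kStableZoneBound$, are in bijection with the satisfying assignments of $\theta$. Reading off $|\kStableZoneBound|$ then gives $\#\theta$.

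\textbf{Construction.} Take $n$ numeric attributes, one per variable, and identify an assignment $\sigma \in \{0,1\}^n$ with the nonnegative refinement magnitude vector $c\cdot\sigma$ for a constant $c>1$, or more robustly with a vector whose coordinates come from $\{1,c\}$; this is the gadget suggested by \Cref{fig:reduction-construction}. Let $t$ be the all-zero tuple. The ranking function $f$ is a polynomial-time computable black box, since the theorem allows $f$ as input; we may take it to be score-based with score $\varphi$. Design $\varphi$ so that $\varphi(\varepsilon(t))$ is high, which makes $t$ drop far and hence $k$-unstable, exactly when $|\varepsilon|$ lies in the staircase set $\{\,|\varepsilon| : \exists \sigma \models \theta,\ |\varepsilon| \text{ matches } \sigma\,\}$. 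Concretely, round each $|\varepsilon_i|$ to a bit $b_i = [\,|\varepsilon_i| \ge c\,]$, and call $\varepsilon$ unstable iff $b \models \theta$ and $|\varepsilon_i| \in \{1,c\}$ for all $i$, i.e.\ $\varepsilon$ sits exactly on a lattice point. Then add $k+1$ dummy tuples whose scores lie strictly between $t$'s original score and the ``high'' score, so that unstable means a position change larger than $k$. Checking the condition is a polynomial-time evaluation of $\theta$, so $f$ is efficient.

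\textbf{Correctness.} Because unstable points occur only at the isolated lattice points $\{1,c\}^n$ corresponding to satisfying assignments, $U^+$ is finite. Componentwise strict containment $\prec$ on $\{1,c\}^n$ coincides with the bitwise order on assignments, so $Min_\prec(U^+)$ is exactly the set of $\prec$-minimal satisfying assignments. This counts minimal models, not all models, and that is the main obstacle. To fix it, make different satisfying assignments pairwise $\prec$-incomparable. Use $2n$ attributes: encode $x_i=1$ as magnitudes $(c,1)$ on the pair $(2i-1,2i)$ and $x_i=0$ as $(1,c)$. Any two distinct encodings then differ in opposite directions on some coordinate pair, so they form an antichain. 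Consequently $Min_\prec(U^+) = U^+$, and $|\kStableZoneBound| = \#\theta$.

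The remaining checks are routine. First, refinements not of lattice form are stable by definition of $\varphi$. Second, sign symmetry is harmless because we only look at the projection onto $\mathbb{R}^{2n}_{\ge 0}$. Third, the instance is polynomial in $|\theta|$. Together these give \textsf{\#P}-hardness by \cite{dnf-sharp-p-complete}. If one prefers $U$ to be full-dimensional rather than a finite point set, replace each lattice point $p$ by the orthant $\{\varepsilon : \varepsilon \succeq p\}$ intersected with a small box. The minimal elements are then still exactly the encodings, so the argument is unchanged.
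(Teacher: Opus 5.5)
Your proposal is correct and follows essentially the same route as the paper: a parsimonious reduction from \textsf{\#DNF} in which each variable gets two attributes, so that encodings of distinct satisfying assignments form a $\prec$-antichain and $Min_\prec(U^+)=U^+$, with a black-box $f$ that moves the refined all-zero tuple past dummy tuples exactly on those encodings. The differences are cosmetic: you use pair magnitudes $(c,1)/(1,c)$ where the paper uses $(1,0)/(0,1)$, which keeps $U$ nonnegative so no sign variants arise. You also place $k+1$ dummies explicitly, whereas the paper uses a single dummy $w$ with $k=0$ and remarks that more can be added.
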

\begin{proof}
    We prove this by a parsimonious reduction from \textsf{\#DNF}. Briefly, \textsf{\#DNF} counts the number of satisfying assignments to a Boolean formula in disjunctive normal form. Furthermore, \textsf{\#DNF} is a well-known \textsf{\#P-complete} problem \cite{dnf-sharp-p-complete}. Let $\theta$ be a DNF formula consisting of clauses referring to a set of variables $x_1, \dots, x_n$. Let $\alpha$ denote an assignment to these variables, in which we say $\alpha \models \theta$ if $\alpha$ is an assignment to the variables that satisfies the formula $\theta$, and $\alpha \not\models \theta$ otherwise. We then let $\alpha(x_i)$ denote the assigned value of the variable $x_i$. 

    Now, we shall set up the reduction to our problem. Let $D$ be a database consisting of two $2n$-dimensional tuples: $v$, a tuple which shall map to an assignment $\alpha$ of the variables in $\theta$; and $w$, a ``dummy'' tuple which will swap ranks with $v$ when $v$ is a satisfying assignment. We initialize $v$ to be the $2n$-dimensional zero vector, and $w$ to be the $2n$-dimensional vector with a constant $c > 1$ in every component. Intuitively, each original variable $x_i$ for $i \in [n]$ is represented by 2 attributes in $v$; each is set to $1$ when the assignment to $x_i$ is either $\top$ or $\bot$, respectively. As we shall soon see, this separation is necessary, as otherwise the containment condition required of refinements in the set of minimal $k$-unstable refinements would become problematic. 
    
    We now define a mapping $\varphi$ from a tuple $t \in D$ to either an assignment $\alpha$, or a special value $\times$ if it cannot map to a valid assignment. For a variable $x_i$ in $\theta$, and a tuple $t \in D$, let $y_{i, \top}$ and $y_{i, \bot}$ be the attributes of $t$ corresponding to a true and false assignment to $x_i$, respectively. We first treat the case of invalid mappings. In the case that for any $x_i$, the corresponding $y_{i, \top}$, $y_{i, \bot}$ values are not in $\{0, 1\}$, then $\varphi(t) = \times$. Furthermore, if any $y_{i, \top}, y_{i, \bot}$ are \emph{both} 0 or 1, then $\varphi(t) = \times$. Now, all that is left in the mapping is to handle valid assignments. Given that the prior cases did not occur, then $\varphi(t)$ is the assignment $\alpha$ such that for each variable $x_i$ in $\theta$, $\alpha(x_i) = \top$ if $y_{i, \top} = 1$, and $\alpha(x_i) = \bot$ otherwise.

    We are now ready to construct the crux of the reduction: the ranking function. We let $f(D)$ return the ranking in which all tuples $t \in D$ with all attributes values in $[-1, 1]$ for which $\varphi(t) \models \theta$ are ranked before all tuples with all attribute values set to $c$, which are then ranked before all tuples with all attribute values in $[-1, 1]$ for which $\varphi(t) \not\models \theta$. It is easy to see then that if we are refining $v$ into $v'$, then $v'$ will rank ahead of $w$ in $f(D_{v \rightarrow v'})$ if and only if $v'$ is a satisfying assignment. Now, letting $k = 0$, we have that any refinement in $k\text{-}\textsf{SB}_{f(D)}^v$ can be mapped to a satisfying assignment of $\theta$. 
    
    All that remains is to show that $k\text{-}\textsf{SB}_{f(D)}^v$ contains all satisfying assignments of $\theta$. Recall that a refinement $\varepsilon$ of $v$ is $0$-unstable if and only if there is a refined tuple $v' \in \varepsilon(v)$ such that $\varphi(v') \models \theta$. Therefore, all we need to show is that no $0$-unstable refinement is contained in any other, which would make every $0$-unstable refinement contained in $k\text{-}\textsf{SB}_{f(D)}^t$ by definition. Towards this end, let $\varepsilon$ be a $0$-unstable refinement for $v$ over $f(D)$, and assume towards a contradiction that there is a distinct refinement $\varepsilon'$ which is $0$-unstable for $v$ over $f(D)$ and $\varepsilon' \prec \varepsilon$. Since $\varepsilon' \prec \varepsilon$, one of the magnitudes of the refinements on the attributes of $\varepsilon'$ is less than the corresponding magnitude in $\varepsilon$. Since all of the components of $\varepsilon$ are either $0$ or $1$ (as a consequence of being a refinement which has a refined tuple with a mapping to a valid assignment), and refinement magnitudes are lower bounded by 0, the difference must be between a magnitude of $1$ in one of the attributes in $\varepsilon$, while the corresponding magnitude in $\varepsilon'$ is 0. Without loss of generality, assume this is the magnitude corresponding to some $y_{i, \top}$. Note then that since the magnitude of $y_{i, \top}$ is $1$ in $\varepsilon$, then $y_{i, \bot}$ must be $0$ in order to contain a refined tuple which maps to a valid assignment. However, since $\varepsilon' \prec \varepsilon$, the magnitude of $y_{i, \bot}$ in $\varepsilon'$ must be 0, but $y_{i, \top}$ must be 0 as well. This contradicts the assumption that $\varepsilon'$ is a $0$-unstable refinement, since its refinement set has tuple mapping to a satisfying assignment (or a valid assignment at all). Therefore, there cannot be a refinement $\varepsilon'$ which is contained in $\varepsilon$ that is also $0$-stable, \ie every $0$-stable refinement for $v$ over $f(D)$ is in $k\text{-}\textsf{SB}_{f(D)}^t$.

    From this, we may conclude that every assignment $\alpha$ such that $\alpha \models \theta$ has a one-to-one correspondence with a $0$-unstable refinement in $k\text{-}\textsf{SB}_{f(D)}^v$. Therefore, given an oracle for computing $|k\text{-}\textsf{SB}_{f(D)}^t|$, we are able to answer $\textsf{\#DNF}$ in polynomial time.  
    
    We note that this is hard for any value of $k$, since we can simply add as many dummy tuples as needed.
\end{proof}

We are now ready to show
\begin{theorem}[\Cref{thm:hardness_bound_comp}]
    Unless $\textsf{FP} = \textsf{\#P}$, there is no polynomial-time algorithm for computing $\kStableZoneBound$ given a database $D$, a tuple $t \in D$, a ranking function $f$, and a value $k$.
\end{theorem}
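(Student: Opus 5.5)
The plan is to reduce from \Cref{thm:counting-hardness}. Suppose, towards a contradiction, that some algorithm $\mathcal{A}$ computes $\kStableZoneBound$ in time polynomial in the size of $(D, t, f, k)$. Then I would build a polynomial-time algorithm for computing $|\kStableZoneBound|$: run $\mathcal{A}$ and count the elements of the set it outputs.

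The first step is to check that counting is cheap. Because $\mathcal{A}$ runs in polynomial time, its output, an explicit list of refinements (vectors in $\mathbb{R}^n$ with polynomially bounded encodings), has polynomially many entries. Counting them, and removing duplicates if the representation permits repeats, takes polynomial time, so $|\kStableZoneBound|$ would be polynomial-time computable.

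The second step invokes \Cref{thm:counting-hardness}. There, counting $|\kStableZoneBound|$ is \textsf{\#P-hard} through a parsimonious reduction from \textsf{\#DNF}, and that reduction is polynomial in the size of the formula $\theta$: the database has two $2n$-dimensional tuples, $k=0$, and the ranking function is described by $\theta$ and is evaluable in polynomial time. Composing this reduction with the counting procedure above would put \textsf{\#DNF} in \textsf{FP}. Since \textsf{\#DNF} is \textsf{\#P-complete}, every \textsf{\#P} function would then lie in \textsf{FP}, giving $\textsf{FP} = \textsf{\#P}$ and contradicting the hypothesis.

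The main obstacle is making the first step rigorous, since the argument depends on how the output of $\mathcal{A}$ is represented. If $\kStableZoneBound$ is output explicitly, the step is straightforward. In the reduction's instances, the boundary consists of the $0/1$ refinements corresponding to satisfying assignments, so it is finite, though possibly exponentially large. An explicit output of exponential size already rules out polynomial time, which means no tension arises there. I would state explicitly that computing the set means listing its elements in this finite setting. I would also remark that the ranking function in the reduction is given as a polynomial-time evaluable procedure, so that the instance size is polynomial in $|\theta|$.
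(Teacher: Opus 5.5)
Your proposal is correct and is essentially the paper's proof. You assume a polynomial-time algorithm for $\kStableZoneBound$, count its necessarily polynomial-size output to obtain $|\kStableZoneBound|$ in polynomial time, and compose this with the parsimonious \textsf{\#DNF} reduction of \Cref{thm:counting-hardness} to conclude $\textsf{FP} = \textsf{\#P}$; your remarks on the explicit-list output and the polynomial instance size only make explicit what the paper leaves implicit.
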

\begin{proof}
    We prove this by showing the contrapositive: if there is a polynomial-time algorithm for computing $\kStableZoneBound$, then $\textsf{FP} = \textsf{\#P}$.
    
    A polynomial-time algorithm for computing $\kStableZoneBound$ implies that computing $|\kStableZoneBound|$ can be done in polynomial time as well, since we may simply run this algorithm and then count the number of refinements in the output $\kStableZoneBound$ (whose size is polynomially-bounded by our assumption on the running time). However, by \Cref{thm:counting-hardness}, all $\textsf{\#P}$ problems are reducible in polynomial time to computing $|\kStableZoneBound|$ via \textsf{\#DNF}. Therefore, such an algorithm implies that $\textsf{FP = \#P}$.
\end{proof}

\section{Binary Search for Reducing the Set of Reasonable Changes}
\label{sec:apdx-binary-search}
A typical feature of ranking functions is \emph{monotonicity}, \ie an improvement in one of the attributes of a tuple should only be able to improve its ranking. We can often make such an assumption since monotone ranking functions are commonly used, as they encode the expectation that the attributes are proxies for utility, and therefore by improving (raising) them, the tuple's rank should only improve \cite{monotonic}.

Recall that to reduce \resChanges, we use a portion of the sampling budget to sample single-dimensional refinements $\varepsilon_i$ for every attribute $a_i$, and let $\varepsilon^*_i$ denote the minimal (absolute) value of $\varepsilon_i$ in the sample such that $\varepsilon(t)$ is $\unstable{k}{f}{D}$. In the case that the given ranking function is monotone, we are able to find a tight lower bound (up to some constant) on $\varepsilon_i^*$ for an attribute $a_i$ in logarithmic time. Formally, a ranking function $f$ is \textbf{monotone} if and only if for every $t,~t'$ such that there is some attribute $a$ in which $t'$ is better than $t$, \ie $t'.a \geq t.a$, and equal in all other attributes, then $f(D)[t'] \leq f(D)[t]$ for every database $D$ containing $t,~t'$ (recall that we assume lower ranks are better) \cite{monotonic}. As a consequence of this definition, for a monotonic ranking function $f$ and a refinement $\varepsilon$, deciding whether every refinement $\varepsilon' \preceq \varepsilon$ is $k$-stable for $t$ over $f(D)$ becomes simple. 

Towards this end, for a refinement $\varepsilon$, let $\varepsilon^+$ and $\varepsilon^-$ be the refinements with the same magnitude as $\varepsilon$, but with all positive or negative components, respectively. If $\Delta_{f(D)}(t, \varepsilon^+(t)) \leq k$ and $\Delta_{f(D)}(t, \varepsilon^-(t)) \leq k$, then every $\varepsilon' \preceq \varepsilon$ is $k$-stable for $t$ over $f(D)$ due to the monotonicity of $f$. Finding a tight lower bound on $\varepsilon_i^*$ for an attribute $a_i$ is then a matter of running a binary search on the possible magnitudes of refinements on the attributes $a_i$ (which we obtain by discretizing the range between $0$ and $\resChanges_i$) in order to find the boundary between the $k$-stable and $k$-unstable refinements made solely by refining $a_i$.

\section{Additional Experiments}
\label{sec:apdx-addtl-expts}

\begin{figure}[h]
    \begin{subfigure}[t]{0.48\textwidth}
        \centering
        \includegraphics[width=0.96\linewidth]{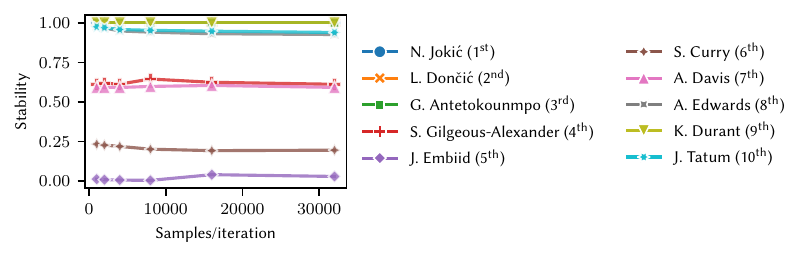}
        \caption{NBA}
    \end{subfigure}
    \begin{subfigure}[t]{0.48\textwidth}
        \includegraphics[width=0.92\linewidth]{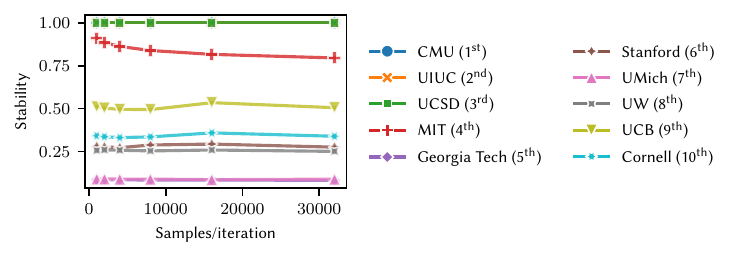}
        \caption{CSRankings}
    \end{subfigure}
    \caption{Effect of the number of samples per round taken by construction step on estimated stability}
    \label{fig:stability-vs-sample-budget}
\end{figure}

\begin{figure}[h]
    \begin{subfigure}[t]{0.48\textwidth}
        \centering
        \includegraphics[width=0.96\linewidth]{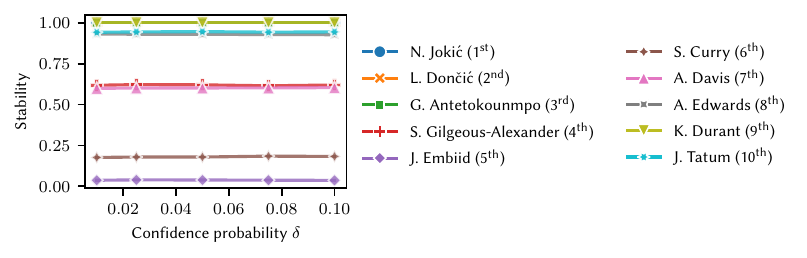}
        \caption{NBA}
    \end{subfigure}
    \begin{subfigure}[t]{0.48\textwidth}
        \includegraphics[width=0.92\linewidth]{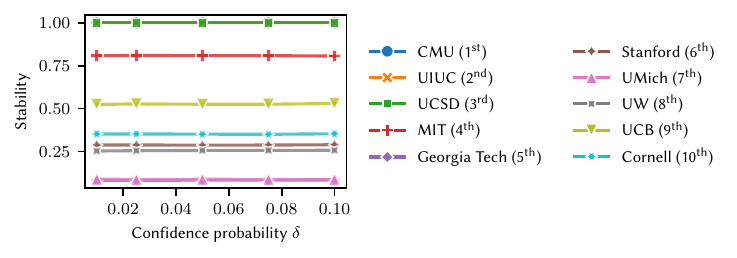}
        \caption{CSRankings}
    \end{subfigure}
    \caption{Effect of the confidence probability $\delta$ on estimated stability}
    \label{fig:stability-vs-confidence-probability}
\end{figure}

\begin{figure}[h]
    \begin{subfigure}[t]{0.48\textwidth}
        \centering
        \includegraphics[width=0.96\linewidth]{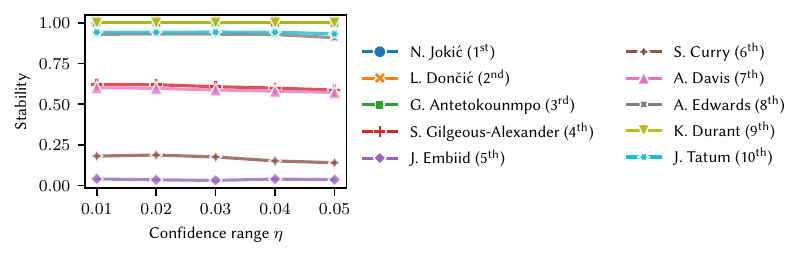}
        \caption{NBA}
    \end{subfigure}
    \begin{subfigure}[t]{0.48\textwidth}
        \includegraphics[width=0.92\linewidth]{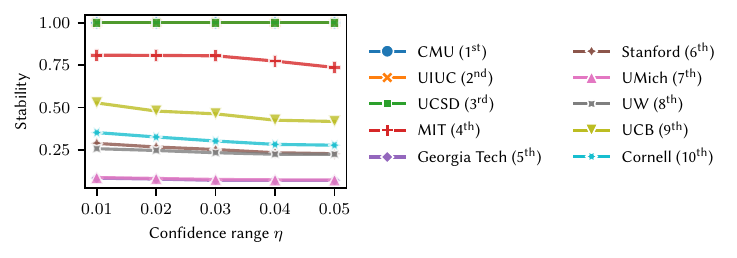}
        \caption{CSRankings}
    \end{subfigure}
    \caption{Effect of the confidence range $\eta$ on estimated stability}
    \label{fig:stability-vs-confidence-range}
\end{figure}

\begin{figure}[h]
    \begin{subfigure}[t]{0.48\textwidth}
        \centering
        \includegraphics[width=0.96\linewidth]{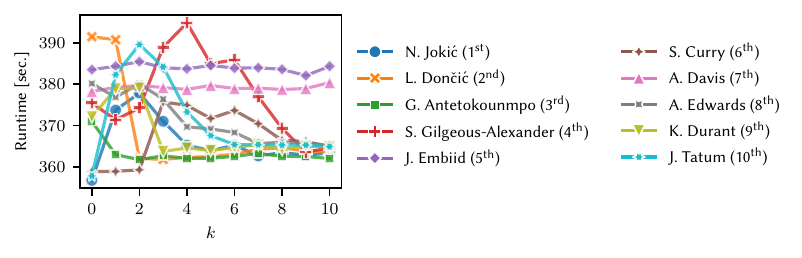}
        \caption{NBA}
    \end{subfigure}
    \begin{subfigure}[t]{0.48\textwidth}
        \includegraphics[width=0.92\linewidth]{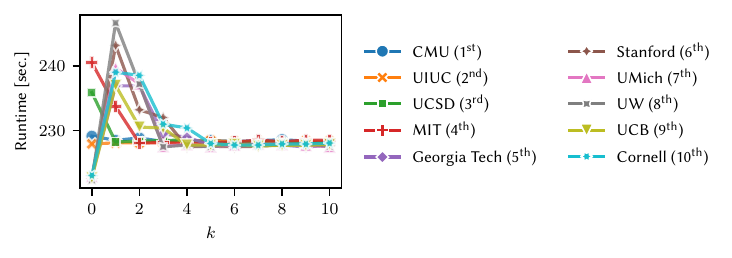}
        \caption{CSRankings}
    \end{subfigure}
    \caption{Runtime of basic \algoName for varying $k$}
    \label{fig:varying_k_noopts}
\end{figure}

\Cref{fig:stability-vs-sample-budget,fig:stability-vs-confidence-probability,fig:stability-vs-confidence-range} show that the parameters of the construction and verification steps of \algoName do not drastically alter its estimation of local stability. In some cases, the stability may go down slightly, \eg the Massachusetts Institute of Technology (MIT) in \Cref{fig:runtime-vs-sample-budget} or the University of California at Berkeley (UCB) in \Cref{fig:stability-vs-confidence-range}. These occur when additional samples are taken (either by design or by not yet having $\alpha$ beneath the desired threshold), and so the estimation of the stable zone boundary is further refined.

\Cref{fig:varying_k_noopts} shows that the basic method has a similar dependence on $k$ as the optimized version. Namely, for larger values of $k$, the runtime decreases due to the verification step taking less time when the estimated stable zone is larger (this causes the rejection sampling to take less time).

\end{document}
\endinput